\newcommand{\mathsym}[1]{{}}
\newcommand{\unicode}[1]{{}}
\theoremstyle{plain}
\newtheorem{theorem}{Theorem}[section]
\newtheorem{corollary}[theorem]{Corollary}
\newtheorem{proposition}[theorem]{Proposition}
\theoremstyle{definition}
\theoremstyle{remark}
\newtheorem{remark}[theorem]{Remark}
\numberwithin{equation}{section}
\begin{document}


\title{Differential recurrences for the distribution of the trace of the $\beta$-Jacobi ensemble}
\author{Peter J. Forrester}
\address{School of Mathematics and Statistics, 
ARC Centre of Excellence for Mathematical \& Statistical Frontiers,
University of Melbourne, Victoria 3010, Australia}
\email{pjforr@unimelb.edu.au}

\author{Santosh Kumar}
\address{
Department of Physics, Shiv Nadar University, Uttar Pradesh 201314, India
}
 \email{skumar.physics@gmail.com}

\date{\today}


\begin{abstract}
Examples of the $\beta$-Jacobi ensemble specify the joint distribution of the transmission
eigenvalues in scattering problems. In this context, there has been interest in the distribution of the trace,
as the trace corresponds to the conductance. Earlier, in the case $\beta = 1$, the trace statistic
was isolated in studies of covariance matrices in multivariate statistics, where it is referred to as Pillai's $V$
statistic. In this context, Davis showed that for $\beta = 1$ the trace statistic, and its Fourier-Laplace transform,
can be characterised by $(N+1) \times (N+1)$ matrix differential equations. For the Fourier-Laplace transform,
this leads to a vector recurrence for the moments. However, for the distribution itself the characterisation provided
was incomplete, as the connection problem of determining the linear combination of Frobenius type solutions that
correspond to the statistic was not solved. We solve this connection problem for Jacobi parameter $b$ and
Dyson index $\beta$ non-negative integers. For the other Jacobi parameter $a$ also a non-negative integer, the power series portion of each Frobenius
solution terminates to a polynomial, and the matrix differential equation gives a recurrence for their computation.
\end{abstract}


\maketitle


\section{Introduction}\label{S1}
In random matrix theory a classical $\beta$-ensemble refers to an eigenvalue probability density function
proportional to
\begin{equation}\label{1}
\prod_{l=1}^N w(x_l) \prod_{1 \le j < k \le N} | x_k - x_j |^\beta,
\end{equation}
with $w(x)$ one of the classical weight functions from the theory of orthogonal polynomials,
\begin{equation}\label{eq:4.2}
		w(x)=
		\begin{cases}
		e^{-x^2}, \quad &\text{Gaussian}
		\\
		x^{a} e^{-x}\chi_{x>0}, \quad &\text{Laguerre}
		\\
		x^a (1-x)^b\chi_{0<x<1}, \quad &\text{Jacobi}. 		\end{cases}
	\end{equation}
Here the notation $\chi_A$ denotes the function taking the value $1$ for $A$ true and $0$ otherwise, and it is assumed
that the parameters $a,b > -1$ so that (\ref{1}) is normalisable.

In both the theory and applications of random matrix theory, the study of the distribution of a linear statistic
of the eigenvalues is prominent; see e.g.~\cite{PS11}. A linear statistic has the functional form
$\sum_{j=1}^N f(\lambda_j)$ --- thus $f$ is a function of a single eigenvalue, while the statistic itself is a symmetric
function of all the eigenvalues.  Our interest in this work is
in computational questions relating to the distribution of the linear statistic $\sum_{j=1}^N \lambda_j$ ---
which is the trace of the matrix --- for the $\beta$-Jacobi ensemble. In the case $\beta = 1$, when the $\beta$-Jacobi ensemble
relates to covariance matrices in multivariate statistics (see e.g.~\cite{Mu82}), the problem attracted interest as long
ago as 1955 \cite{Pi55}, as a test statistic for various correlation hypotheses, and it is referred to as 
Pillai's $V$ statistic or  Pillai's trace  (Pillai's name in this context is sometimes joined with
 Bartlett in recognition of the even earlier work \cite{Ba39}).
 
 More recently, the distribution of the traces for each of the cases $\beta = 1$, 2 and 4 of
 the $\beta$-Jacobi ensemble has attracted attention because of its relevance to the study of
 quantum transport problems in mesoscopic wires and cavities \cite{No08,KSS09,MS11}.
 The Jacobi parameters $a,b$ are determined by  the number of
 scattering channels for ingoing and outgoing waves, and the value of $\beta$; see
 e.g.~\cite[\S 1.2]{MS11}.
  Under the assumption that the scattering matrix is a uniformly distributed
 random (Haar measure) unitary matrix, respecting time reversal or spin-rotational
 symmetries where appropriate (such a symmetry determines $\beta$ --- usually referred to as the
 Dyson index), the $\{ x_j \}$ in (\ref{1}) correspond to the transmission eigenvalues and the trace (in dimensionless
 units) is the conductance; see the review \cite{Be97}. Physically, the case of uniformly distributed random unitary matrix corresponds to a situation when the two leads connected to the chaotic cavity are without any tunnel barriers; commonly referred to as ideal leads. This gives rise to the joint density of transmission eigenvalues coinciding with that of Jacobi ensemble with parameter $b=0$ (or, equivalently, $a=0$ for the reflection eigenvalues). The case of nonzero $b$ has been shown to arise for the transmission between two leads in a chaotic cavity comprising three leads~\cite{SM06}. Furthermore, the \emph{full} Jacobi weight (nonzero $a,b$) arises in the description of Andreev reflection eigenvalues in normal-metal-superconductor junction, and thermal transport in two-dimensional topological superconductors~\cite{Be15}.
We remark too that in the cases $\beta = 1$ and $2$, the cumulative distribution of the trace
 statistic occurs in the calculation of the volume of certain metric balls for real and complex Grassmann manifolds
 \cite{PWTC16,KPW16}.
 
 For the class of eigenvalue probability density functions (\ref{1}), the probability density
 function for the trace is specified by
\begin{equation}\label{3}
P(t) = {1 \over C_N} \int_I dx_1 \cdots \int_I dx_N \, \delta\Big ( t - \sum_{i=1}^N x_i \Big )
\prod_{l=1}^N w(x_l) \prod_{1 \le j < k \le N} | x_k - x_j |^\beta,
\end{equation} 
 where $C_N$ is the normalisation, $I$ is the interval of support of the eigenvalues and
 $\delta (y)$ is the Dirac delta function. The Fourier-Laplace transform of (\ref{3}), $\hat{P}(s)$ say, which can be viewed as
 the exponential generating function of the moments of the trace, is given by
\begin{equation}\label{3a}
\hat{P}(s) = {1 \over C_N} \int_I dx_1 \cdots \int_I dx_N \, 
\prod_{l=1}^N w(x_l) e^{-s x_l}  \prod_{1 \le j < k \le N} | x_k - x_j |^\beta,
\end{equation}  
and thus corresponds to a deformation of the weight with its multiplication by an exponential factor.
 In the Gaussian $(G)$ and Laguerre $(L)$ cases of (\ref{eq:4.2}) the
integral (\ref{3a}) is readily evaluated,
\begin{equation}
\hat{P}^{(G)}(s) = e^{Ns^2/4}, \qquad \hat{P}^{(L)}(s) = (1+s)^{-(a+1) N - \beta N (N - 1)/2}.
\end{equation}  
These are recognised as the characteristic functions of the normal distribution of mean zero, standard
deviation $\sqrt{N/2}$, and of the gamma distribution with parameters $k = (a+1)N +  \beta N (N - 1)/2$,
$\theta = 1$. However the computation of $P(t)$ in the Jacobi case is a far more complicated
task.

From the definitions
\begin{equation}\label{3b}
\hat{P}^{(J)}(s) = {1 \over S_N(a,b,\beta)} \int_0^1  dx_1 \cdots \int_0^1 dx_N \, 
\prod_{l=1}^N x_l^a (1 - x_l)^b e^{-s x_l}  \prod_{1 \le j < k \le N} | x_k - x_j |^\beta.
\end{equation}  
The normalisation $S_N(a,b,\beta)$ --- corresponding to the multiple integral in (\ref{3b}) with $s=0$ ---
is known as the Selberg integral and is specified by (\ref{S}) below. As to be discussed in
Section \ref{S2a}, (\ref{3b}) relates to a special function within the theory of multidimensional
hypergeometric functions based on Jack polynomials; see e.g.~\cite[Ch.~12]{Fo10}. From this theory,
the coefficient of $s^k$, $c_k^{(J)}$ say, can be given explicitly (Proposition \ref{P3}), but consists of a sum over all partitions $\kappa$
of no more than $N$ parts, and of length $k$. This is simple enough for small $k$ (e.g.~for $k=1$
there is only a single term corresponding to $\kappa = (1,0^{N-1})$ for $k=2$ there are two terms corresponding to the partitions of
two, $(2,0^{N-1})$ and $(1,1,0^{N-2})$ etc.), and the following formulas are obtained.

\begin{proposition}\label{P3a}
Let $m_k^{(J)}$ denote the $k$-th moment of the trace for the $\beta$-Jacobi ensemble, and set
\begin{equation}\label{u12}
u_1 = (\beta/2)(N-1) + a + 1, \qquad u_2 = \beta (N-1) + a + b + 2.
\end{equation}  
We have
\begin{equation}\label{u12+}
m_1^{(J)} = {N u_1 \over u_2}, \quad m_2^{(J)} = {N(N-1) \over 1 + \beta/2}{u_1 ( u_1 - \beta/2) \over u_2 ( u_2 - \beta/2)}  +
  {N (1 + \beta N /2) \over 1 + \beta/2}{u_1 ( u_1 + 1) \over u_2 ( u_2 + 1)},
 \end{equation}
 \begin{multline}\label{u12+3}
m_3^{(J)} = {N(N-1)(N-2) \over (1 + \beta/2)(1+\beta)} {u_1 ( u_1 - \beta/2) ( u_1 - \beta) \over u_2 ( u_2 - \beta/2) ( u_2 - \beta)}  +
3  {N(N-1)(1+\beta N/2)) \over (1 + \beta/4)(1+\beta)} \\ \times  {u_1 ( u_1 +1) ( u_1 - \beta/2) \over u_2 ( u_2 +1) ( u_2 - \beta/2)}  +
 {N(1 + \beta N/4)(1+\beta N/2) \over (1 + \beta/4)(1+\beta/2)} {u_1 ( u_1 +1) ( u_1 + 2) \over u_2 ( u_2 +1) ( u_2 + 2)}.
 \end{multline} 
\end{proposition}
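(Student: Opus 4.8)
The plan is to deduce Proposition \ref{P3a} from the explicit partition-sum formula for $c_k^{(J)}$ supplied by Proposition \ref{P3}, together with the elementary relation between the coefficients of $\hat P^{(J)}(s)$ and the trace moments. Expanding $e^{-s\sum_l x_l}$ in \eqref{3b} and comparing powers of $s$ gives $[s^k]\hat P^{(J)}(s)=(-1)^k m_k^{(J)}/k!$, i.e.\ $m_k^{(J)}=(-1)^k k!\,c_k^{(J)}$; equivalently $m_k^{(J)}=\langle p_1^k\rangle_{\mathrm{J}}$, the $\beta$-Jacobi average of the $k$-th power of the power sum $p_1(x)=\sum_l x_l$. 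Proposition \ref{P3} writes this average as a finite sum over partitions $\kappa$ with $|\kappa|=k$ and $\ell(\kappa)\le N$, each summand being the product of (i) the coefficient of the Jack polynomial $C_\kappa^{(2/\beta)}$ in the expansion of $p_1(x)^k$, (ii) the specialisation $C_\kappa^{(2/\beta)}(1^N)$, and (iii) the ratio of products over the cells of $\kappa$ furnished by Kadell's Jack-polynomial generalisation of the Selberg integral $S_N(a,b,\beta)$ --- equivalently, the normalised Selberg average $\langle C_\kappa^{(2/\beta)}\rangle_{\mathrm{J}}/C_\kappa^{(2/\beta)}(1^N)$. For $k=1,2,3$ there are respectively $1$, $2$ and $3$ such partitions, so each moment is an explicit short sum and the task reduces to assembling and simplifying these three ingredients.

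For $k=1$ the only partition is $\kappa=(1)$: here $C_{(1)}^{(2/\beta)}=p_1$, so the power-sum coefficient is $1$, $C_{(1)}^{(2/\beta)}(1^N)=N$, and Kadell's single-cell ratio is exactly $u_1/u_2$ with $u_1,u_2$ as in \eqref{u12}, giving $m_1^{(J)}=Nu_1/u_2$. For $k=2$ the partitions are $(2)$ and $(1,1)$: one substitutes the two expansion coefficients of $p_1^2$ in $C_{(2)}^{(2/\beta)},C_{(1,1)}^{(2/\beta)}$, the two values $C_{(2)}^{(2/\beta)}(1^N)$ and $C_{(1,1)}^{(2/\beta)}(1^N)$, and the two Kadell ratios --- over the cells $(1,1),(1,2)$ for $(2)$, and $(1,1),(2,1)$ for $(1,1)$ --- and collects terms. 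The row partition $(2)$ contributes the factor $u_1(u_1+1)/\bigl(u_2(u_2+1)\bigr)$ and the column partition $(1,1)$ the factor $u_1(u_1-\beta/2)/\bigl(u_2(u_2-\beta/2)\bigr)$, with combinatorial prefactors combining to $N(1+\beta N/2)/(1+\beta/2)$ and $N(N-1)/(1+\beta/2)$ respectively; this is \eqref{u12+}. The same computation for $k=3$, now with the three partitions $(3)$, $(2,1)$, $(1,1,1)$ and their cell products --- producing the numerator triples $u_1(u_1+1)(u_1+2)$, $u_1(u_1+1)(u_1-\beta/2)$, $u_1(u_1-\beta/2)(u_1-\beta)$ over the corresponding $u_2$-shifts --- yields \eqref{u12+3}.

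The only real work lies in the bookkeeping. One must fix a single normalisation for the Jack polynomials (the $C$-normalisation, for which the expansion $p_1^k=\sum_{|\kappa|=k}a_\kappa\,C_\kappa^{(2/\beta)}$ has known coefficients and the specialisations $C_\kappa^{(2/\beta)}(1^N)$ are given by a standard cell product), keep careful track of the arm- and leg-length conventions in Kadell's product over the cells of each $\kappa$, and then perform the algebraic simplification that collapses these $\beta$-deformed products into the compact rational functions of $u_1,u_2$ appearing in \eqref{u12+} and \eqref{u12+3} --- in particular matching the shifts $u_i$, $u_i\pm\beta/2$, $u_i\pm\beta$, $u_i+1$, $u_i+2$ against the denominators $1+\beta/2$, $1+\beta/4$, $1+\beta$. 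This step is routine but error-prone, so I would cross-check the final formulas by specialising to $N=1$, where $m_k^{(J)}$ reduces to a ratio of Beta integrals and all the combinatorial prefactors trivialise, and to $\beta=2$, where the trace moments admit an independent determinantal evaluation.
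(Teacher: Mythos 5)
Your proposal is correct and follows essentially the same route as the paper: the paper proves Proposition \ref{P3a} by evaluating the partition sum of Proposition \ref{P3} (itself obtained from the Kadell/Jack-polynomial average behind Corollary \ref{C2}) for $k=1,2,3$, which is exactly your computation. The only simplification you could note is that in the $C$-normalisation one has $\sum_{|\kappa|=k}C_\kappa^{(\alpha)}(x)=p_1(x)^k$, so the expansion coefficients in your ingredient (i) are identically $1$ and that piece of bookkeeping disappears.
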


However the number of terms in the sum specifying $m_k^{(J)}$ quickly proliferates, being asymptotically equal to  $e^{\pi \sqrt{2 k/3}}$,
for $k \le N$ large, as is well known from the theory of partitions. In Section \ref{S2b} a characterisation of $\hat{P}^{(J)}(s)$ in terms of a first order
linear matrix differential equation is given. One consequence
is a formula for $c_k^{(J)}$ in terms of a product of $(k+1)$ matrices
of size $(N+1) \times (N+1)$.

\begin{proposition}\label{P1}
Write $\mathbf c_k = [c_{p,k}]_{p=0}^N$. With $c_{N,0} = 1$,
specify $\{c_{s,0} \}_{s=0}^{N-1}$ by (\ref{MD4}) so that $\mathbf c_0$ is explicit.
With the $(N+1) \times (N+1)$ matrices $\mathbf X, \mathbf Y$ specified by (\ref{XY}), require that 
  \begin{equation}\label{MD5}  
\mathbf c_k  = \Big ( \prod_{s=1}^k ( (k-s+1)  \mathbf I_{N+1} - \mathbf Y)^{-1}  \Big )  \mathbf X  \mathbf c_0, \qquad (k=1,2,\dots).
 \end{equation}
 We have
  \begin{equation}\label{MD5a}  
  c_k^{(J)} = {(-1)^k m_k^{(J)}  \over k! } = (\mathbf c_k)_{N} =  (\mathbf c_k)_{0} \Big |_{b \mapsto b+1},
\end{equation}
where the notation $(\mathbf v)_{n}$ refers to the component in position $n$ of the vector $\mathbf v$,
with the numbering of positions starting at $0$. To avoid possible division by zero, only the final equality
should be used in the range $-1<b\le 0$.
\end{proposition}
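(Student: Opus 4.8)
The plan is to derive the matrix differential equation for $\hat P^{(J)}(s)$ promised in Section \ref{S2b}, and then extract the recurrence for the Taylor coefficients by equating powers of $s$. First I would recall (or reconstruct) the first-order linear matrix ODE satisfied by the vector-valued function whose entries are $\hat P^{(J)}(s)$ together with $N$ companion integrals obtained by inserting factors built from the weight; concretely, differentiating (\ref{3b}) under the integral sign brings down a factor $-\sum_l x_l$, and then an integration by parts in each $x_l$ using the Jacobi weight $x_l^a(1-x_l)^b e^{-sx_l}$ produces a linear relation among $\hat P^{(J)}$ and finitely many auxiliary integrals, closing into a system of size $N+1$ once one accounts for the Vandermonde power $\beta$ and the boundary terms at $x_l=0,1$. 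This yields an equation of the schematic form $(\,s\,\mathbf I_{N+1}+\text{const}\,)\,\tfrac{d}{ds}\mathbf f(s) = \mathbf A\,\mathbf f(s)$, and after a change of variables/normalisation it becomes the form encoded by the matrices $\mathbf X,\mathbf Y$ in (\ref{XY}); the component in position $N$ of $\mathbf f$ is $\hat P^{(J)}(s)$ itself.

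Next I would write $\mathbf f(s)=\sum_{k\ge 0}\mathbf c_k s^k$ (note $\mathbf c_k$ here is the vector of $s^k$-coefficients of all $N+1$ entries, so that $(\mathbf c_k)_N = c_k^{(J)}$), substitute into the matrix ODE, and collect the coefficient of $s^{k-1}$. The factor $(k-s+1)\mathbf I_{N+1}-\mathbf Y$ in (\ref{MD5}) is exactly what arises from combining the $\frac{d}{ds}$ acting on $s^k$ (giving $k$) with the $s\frac{d}{ds}$-type term and the constant matrix $\mathbf Y$; solving for $\mathbf c_k$ gives $\mathbf c_k = ((k-s+1)\mathbf I_{N+1}-\mathbf Y)^{-1}$—wait, more precisely, at each step $\mathbf c_k$ is obtained from $\mathbf c_{k-1}$ by left-multiplication by $(k\,\mathbf I_{N+1}-\mathbf Y)^{-1}\mathbf X$, and iterating from $\mathbf c_0$ telescopes into the displayed product $\prod_{s=1}^k((k-s+1)\mathbf I_{N+1}-\mathbf Y)^{-1}\mathbf X$, except that $\mathbf X$ should be pulled through—so I would be careful to record whether the recurrence is $\mathbf c_k=(k\mathbf I-\mathbf Y)^{-1}\mathbf X\,\mathbf c_{k-1}$, whose iterate indeed matches (\ref{MD5}) with a single $\mathbf X$ only if $\mathbf X$ and the inverse factors interleave as written; I will verify the exact bookkeeping against the stated formula. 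The initial vector $\mathbf c_0=\mathbf f(0)$ has its entries given by ratios of Selberg integrals (\ref{S}), normalised so that $(\mathbf c_0)_N=\hat P^{(J)}(0)=1$, i.e. $c_{N,0}=1$, and the remaining components $c_{s,0}$ are the stated closed forms (\ref{MD4}).

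It then remains to justify the two alternative readouts in (\ref{MD5a}). The identity $c_k^{(J)}=(-1)^k m_k^{(J)}/k!$ is immediate since $\hat P^{(J)}(s)=\sum_k \frac{(-s)^k}{k!}m_k^{(J)}$ is the exponential generating function of the moments, and $(\mathbf c_k)_N=c_k^{(J)}$ because the $N$-th component of $\mathbf f$ is $\hat P^{(J)}$. The shift relation $(\mathbf c_k)_N = (\mathbf c_k)_0\big|_{b\mapsto b+1}$ should follow from the explicit structure of the auxiliary integrals: the $0$-th component is, up to an elementary prefactor, the analogue of $\hat P^{(J)}$ with one extra power of $(1-x_l)$ in the weight—or an identity of this flavour relating the two ends of the vector, visible directly from how the system in (\ref{XY}) was built—so that evaluating at $b+1$ undoes the shift; the caveat about $-1<b\le 0$ is exactly the range in which some factor $(k\mathbf I-\mathbf Y)$ or a Selberg-integral prefactor in the $0$-component formula degenerates, forcing use of the $b\mapsto b+1$ version. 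The main obstacle I anticipate is the derivation of the matrix ODE itself—getting the integration-by-parts boundary terms and the $\beta$-dependent Vandermonde contributions to close into an $(N+1)\times(N+1)$ system with precisely the matrices $\mathbf X,\mathbf Y$ of (\ref{XY})—rather than the subsequent coefficient extraction, which is a routine power-series manipulation.
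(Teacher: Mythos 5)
Your coefficient-extraction step and the moment identification $c_k^{(J)}=(-1)^k m_k^{(J)}/k!$ match the paper, and you are right to be suspicious of the placement of $\mathbf X$ in (\ref{MD5}): the recurrence really is $(l\,\mathbf I_{N+1}-\mathbf Y)\mathbf c_l=\mathbf X\,\mathbf c_{l-1}$ (equation (\ref{MD3})), so its iterate has the factors $(\,(k-s+1)\mathbf I_{N+1}-\mathbf Y)^{-1}$ and $\mathbf X$ interleaved; since $\mathbf X$ and $\mathbf Y$ do not commute, the displayed form with a single trailing $\mathbf X$ must be read that way. However, there is a genuine gap at the heart of your proposal: you never construct the $(N+1)$-dimensional system, you only conjecture that differentiating (\ref{3b}) and integrating by parts will ``close into'' one, and you explicitly defer this as the main obstacle. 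The paper does not obtain the system by direct integration by parts on (\ref{3b}) at all. It starts from the known differential-difference system (\ref{5.1a}) of Forrester for the Selberg-type integrals $J_{p,N}^{(\alpha)}(x)$, which carry an insertion of the elementary symmetric polynomial $e_p(x-t_1,\dots,x-t_N)$ and a factor $\prod_l(x-t_l)^\alpha$, and then performs the change of variables $t_l\mapsto xt_l$ followed by the confluent limit $x\mapsto x/\lambda_2$, $\lambda_2\to\infty$ with $\lambda_1=a$, $\alpha=b-1$ (Corollary \ref{C6}) to arrive at (\ref{6.1a}), which is then recast as the matrix equation (\ref{MD}). Without this (or an actual execution of your integration-by-parts plan, including the $\beta$-dependent Vandermonde terms), the matrices $\mathbf X,\mathbf Y$ of (\ref{XY}) and the initial data (\ref{MD4}) are not available to you.

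The same gap undermines the one nontrivial readout in (\ref{MD5a}). The identity $(\mathbf c_k)_N=(\mathbf c_k)_0\big|_{b\mapsto b+1}$ rests on knowing exactly what the auxiliary integrals are: the $p$-th component $\hat H_p(x)$ in (\ref{6.1}) inserts $e_p(1-t_1,\dots,1-t_N)$ against the weight $t^a(1-t)^{b-1}e^{-xt}$, so the $N$-th component acquires $\prod_l(1-t_l)\cdot(1-t_l)^{b-1}=(1-t_l)^b$ and reproduces (\ref{3b}), while the $0$-th component carries one \emph{fewer} power, $(1-t_l)^{b-1}$, i.e.\ it is $\hat P^{(J)}$ with $b$ replaced by $b-1$; hence the substitution $b\mapsto b+1$ restores it. Your sketch has the shift in the wrong direction (``one extra power of $(1-x_l)$'' would require $b\mapsto b-1$ to undo), and the hedge ``or an identity of this flavour'' does not supply the argument. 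The caveat for $-1<b\le 0$ is likewise specific: the $0$-th entry of $\mathbf c_0$ is the Selberg integral $S_N(a,b-1,\beta)$, which requires $b>0$, and the ratio in (\ref{MD4}) divides by $\tilde D_N=Nb$, which vanishes at $b=0$; this is why only the $b\mapsto b+1$ readout is safe there.
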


The analytic structure of (\ref{3}) in the Jacobi case, $P^{(J)}(t)$ say, is more complex than that of its
Fourier-Laplace transform (\ref{3b}). It is supported on the interval $t \in [0,N]$ and has a distinct
functional form for each integer subinterval $[j,j+1]$ $(j=0,\dots,N-1)$ therein;
this latter property is well
known for the probability density function of fixed step length uniform random walks \cite{BS16}, and of that
for certain pattern avoiding permutations \cite{BEGM19} as examples from the recent literature.
 However, there are some
features in common. In both the subintervals $[0,1]$ and $[N-1,N]$ it is possible to use Jack polynomial
theory to give an expression for the coefficients in the power series expansion (about the origin for the
subinterval $[0,1]$, and about $t=N$ for the subinterval $[N-1,N]$) involving a sum over partitions of
fixed length; see Proposition \ref{P7} and (\ref{7e}). As noted in Proposition \ref{P8}, simplifications are
possible for $b=1$ and $b=-\beta/2$. Another common feature relates to a recursive computation of the
coefficients in series expansions relating to the points $t=p$ $(p=0,\dots,N-1)$ using a matrix differential
equation. This relies on identification of those series as the Frobenius type power series solution of 
the differential equation, and moreover on knowledge of the scalars in the required linear combination of
the latter (i.e.~solution of the connection problem), which is possible for $b$ and $\beta$ non-negative integers. 

\begin{proposition}\label{P1x}
Let $S_N(a,b,\beta)$ denote the Selberg integral as specified by (\ref{S}). Set 
  \begin{equation}\label{X1}
  \xi_p = (-1)^{p (b+1) + \beta p (p - 1)/2},
  \end{equation}
  define 
   \begin{equation}\label{10g} 
 \gamma_p = (N - 1) + b p + {\beta \over 2} p (p - 1) + a (N - p) +  {\beta \over 2} (N - p)  (N - p - 1),
 \end{equation}
 and specify $K_N(a,b,p,\beta)$ as in Proposition \ref{P10}. For $b, \beta$ non-negative integers we have
   \begin{equation}\label{10hy} 
   P^{(J)}(t) = {1 \over S_N(a,b,\beta)} \sum_{p=0}^{N-1} \xi_p \binom{N}{p}
   K_N(a,b,p,\beta) \chi_{p \le t \le N} (t - p)^{\gamma_p} F_N^{(p)} (t-p),
  \end{equation}
  where each $F_N^{(p)}(s)$ is analytic in the range $-1 < s < N-p$, normalised so that $F_N^{(p)}(0)=1$.
  
  If $a$ is further specialised to a non-negative integer, $F_N^{(p)}(s)$ reduces to a polynomial of degree
  (\ref{3.29a}). The latter is computable from a matrix differential equation (see \S 3.3).
  \end{proposition}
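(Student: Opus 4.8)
The plan is to obtain the explicit representation (\ref{10hy}) directly from the multiple integral for $P^{(J)}(t)$ by an inclusion--exclusion argument, and only afterwards to connect it with the matrix differential equation that is used for the recursive computation. Writing
\[
P^{(J)}(t)=\frac{1}{S_N(a,b,\beta)}\int_{[0,1]^N}\delta\Big(t-\sum_{l=1}^N x_l\Big)\prod_{l=1}^N x_l^{a}(1-x_l)^{b}\prod_{1\le j<k\le N}|x_k-x_j|^{\beta}\,dx_1\cdots dx_N ,
\]
I would use $\chi_{0<x_l<1}=\chi_{x_l>0}-\chi_{x_l>1}$ for each $l$ and expand the product, producing a sum over subsets $S\subseteq\{1,\dots,N\}$ with sign $(-1)^{|S|}$ in which the variables indexed by $S$ range over $(1,\infty)$ and the rest over $(0,\infty)$. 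Here the hypotheses $b,\beta\in\Z_{\ge 0}$ are essential: they make $(1-x_l)^b$ and $\prod_{j<k}|x_k-x_j|^{\beta}$ polynomial (the latter on an ordered chamber), so that the shift $x_i=1+y_i$ for $i\in S$ is legitimate, producing $(1-x_i)^b=(-1)^b y_i^{\,b}$ and splitting the Vandermonde factor into a within-$S$ part $\prod_{i<i'\in S}|y_{i'}-y_i|^{\beta}$, a within-$S^{c}$ part, and a cross part $\prod_{i\in S,\,j\notin S}(1+y_i-x_j)^{\beta}$ carrying no absolute value, since $1+y_i>1>x_j$ on the support of the delta function (which forces $t<|S|+1$, hence $\sum_{j\notin S}x_j<1$). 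By permutation symmetry the $\binom{N}{p}$ subsets of size $p$ contribute equally, and the delta function becomes $\delta\big((t-p)-\sum_{i\in S}y_i-\sum_{j\notin S}x_j\big)$, so the size-$p$ term is supported on $t>p$ and depends on $t$ only through $\tau:=t-p$.

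I would then read off the claimed structure from the size-$p$ term, which up to a sign factor that reorganises into $\xi_p$ of (\ref{X1}) equals $\tfrac{1}{S_N}\binom{N}{p}G_p(\tau)$, where, with $\Delta_\tau=\{u_i\ge0,\ \sum u_i=\tau\}$ the $N$-simplex,
\[
G_p(\tau)=\int_{\Delta_\tau}\Phi_p ,\qquad \Phi_p(y,x)=\prod_{i\in S}(1+y_i)^{a}y_i^{\,b}\prod_{j\notin S}x_j^{\,a}(1-x_j)^{b}\,(\mathrm{Vand}_S)^{\beta}(\mathrm{Vand}_{S^c})^{\beta}\prod_{i\in S,\,j\notin S}(1+y_i-x_j)^{\beta}.
\]
Factoring $\Phi_p=\Phi_p^{\mathrm{lead}}\psi_p$ with $\Phi_p^{\mathrm{lead}}=\prod_{i\in S}y_i^{\,b}\prod_{j\notin S}x_j^{\,a}(\mathrm{Vand}_S)^{\beta}(\mathrm{Vand}_{S^c})^{\beta}$ homogeneous of degree $\gamma_p-(N-1)$ (this is where $\gamma_p$ of (\ref{10g}) enters) and $\psi_p$ analytic with $\psi_p(0)=1$, and using $\int_{\Delta_\tau}(\text{homogeneous of degree }D)=\tau^{N-1+D}\int_{\Delta_1}(\cdots)$, a termwise expansion of $\psi_p$ yields $G_p(\tau)=\tau^{\gamma_p}\widetilde G_p(\tau)$ with $\widetilde G_p$ analytic at $0$ and $\widetilde G_p(0)=\int_{\Delta_1}\Phi_p^{\mathrm{lead}}$. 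Setting $F_N^{(p)}:=\widetilde G_p/\widetilde G_p(0)$ gives (\ref{10hy}), with $K_N(a,b,p,\beta)$ equal to $\widetilde G_p(0)$ up to the sign convention of Proposition \ref{P10}; and $\widetilde G_p(0)$ is evaluated by the Dirichlet reduction $\int_{\Delta_1}f=\Gamma(\gamma_p+1)^{-1}\int_{[0,\infty)^N}e^{-\sum u}f$, which, because $\Phi_p^{\mathrm{lead}}$ has no cross term, factorises into a Laguerre--Selberg integral over the $p$ variables with weight $y^{\,b}$ times one over the $N-p$ variables with weight $x^{\,a}$, each in closed form --- reproducing $K_N(a,b,p,\beta)$ of Proposition \ref{P10}. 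The analyticity of $F_N^{(p)}$ on $-1<s<N-p$ I would read off from the same integral representation (the integrand is analytic in $\tau$ on $\tau>-1$ and the simplex $\Delta_\tau$ keeps a fixed combinatorial type), the lower endpoint being where a corner of $\Delta_\tau$ first meets the locus $1+\tau v_i=0$; the case $p=0$ is cross-checked against Proposition \ref{P7}.

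For the last sentence of the statement: when $a$ is also a non-negative integer, $\Phi_p$ is an honest polynomial in $(y,x)$ of total degree $N(a+b)+\tfrac{\beta}{2}N(N-1)$, so $G_p(\tau)=\int_{\Delta_\tau}\Phi_p$ is a polynomial in $\tau$ of degree $(N-1)+N(a+b)+\tfrac{\beta}{2}N(N-1)$; hence $F_N^{(p)}$ is a polynomial of degree $(N-1)+N(a+b)+\tfrac{\beta}{2}N(N-1)-\gamma_p=ap+b(N-p)+\beta p(N-p)$, which is (\ref{3.29a}). The link with the recurrence of \S 3.3 is then: the wide range of analyticity of each $F_N^{(p)}$ is exactly what permits an induction showing that $(t-p)^{\gamma_p}F_N^{(p)}(t-p)$ is a Frobenius solution, with $\gamma_p$ an indicial exponent at the regular singular point $t=p$, of the homogeneous equation obtained by inverse Fourier--Laplace transform of the characterisation in Proposition \ref{P1} --- since $\sum_{q\le p}\tfrac{1}{S_N}\xi_q\binom{N}{q}K_N(a,b,q,\beta)(t-q)^{\gamma_q}F_N^{(q)}(t-q)$ coincides with $P^{(J)}$ on $(p,p+1)$ and, being analytic and single valued across $t=p+1$, continues to solve the equation on $(p+1,p+2)$, so subtracting it from the $q\le p+1$ sum isolates the next Frobenius solution. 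Substituting the ansatz $(t-p)^{\gamma_p}\sum_{n\ge0}f_n(t-p)^n$ into the matrix equation then produces a finite vector recurrence for the $f_n$, which for $a\in\Z_{\ge0}$ terminates exactly at $n=ap+b(N-p)+\beta p(N-p)$.

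I expect the main obstacle to be the sign and normalisation bookkeeping of the second step: arranging the various sign factors (from the $(1-x_l)^b$ terms, the inclusion--exclusion, and the Vandermonde splitting) into exactly $\xi_p$ of (\ref{X1}), checking that the cross factors $\prod_{i\in S,\,j\notin S}(1+y_i-x_j)^{\beta}$ indeed drop out of $\Phi_p^{\mathrm{lead}}$ so that the factorisation into two Laguerre--Selberg integrals is valid and yields the stated $K_N$, and establishing the precise interval of analyticity of $F_N^{(p)}$. A secondary point is to confirm that $\gamma_p$ is the only indicial exponent at $t=p$ that contributes, i.e.\ that the remaining local solutions do not supply competing analytic pieces that would spoil the uniqueness of the linear combination in (\ref{10hy}) or the termination of the recurrence.
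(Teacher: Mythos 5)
Your inclusion--exclusion set-up, the extraction of the exponent $\gamma_p$ by homogeneity, the Dirichlet/Laplace reduction giving $K_N$, and the degree count for $a\in\Z_{\ge 0}$ all match the first half of the paper's argument (the paper uses the scaling $x_l\mapsto 1-sy_l$ where you use a translation, but that is cosmetic). However, there is a genuine gap at the step where you drop the absolute values on the cross factors. You assert that $\prod_{i\in S,\,j\notin S}|1+y_i-x_j|^{\beta}$ carries no absolute value ``since $1+y_i>1>x_j$ on the support of the delta function (which forces $t<|S|+1$)''. The delta function does not force $t<p+1$: the size-$p$ term is supported on all of $p\le t\le N$, and once $t-p>1$ the constraint $\sum y_i+\sum_{j\notin S}x_j=t-p$ allows $x_j>1+y_i$, so $1+y_i-x_j$ changes sign inside the domain. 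For $\beta$ even this is harmless, and your proof is then essentially the paper's; for $\beta$ odd it is fatal, for two reasons. First, the honest inclusion--exclusion term (with $|{\cdot}|^{\beta}$) is then \emph{not} of the form $(t-p)^{\gamma_p}\times(\text{analytic on }(-1,N-p))$ --- it changes functional form again at $t=p+1$ --- so the decomposition (\ref{10hy}) cannot be read off term by term from inclusion--exclusion. Second, your sign bookkeeping can only ever produce $(-1)^{p(b+1)}$ (inclusion--exclusion times $(1-x_i)^b=(-1)^by_i^b$; your within-$S$ Vandermonde $\prod|y_{i'}-y_i|^{\beta}$ is sign-free), and this is demonstrably wrong: for $N=3$, $a=b=0$, $\beta=1$, $p=2$ one has $\binom{3}{2}K_3(0,0,2,1)/S_3(0,0,1)=3\cdot\tfrac16\cdot 30=+15$, while the coefficient of $(t-2)^3\chi_{t>2}$ in the exact formula (\ref{11c}) is $-15$; the required sign is $\xi_2=(-1)^{2(b+1)+\beta\cdot 1}=-1$, i.e.\ the factor $(-1)^{\beta p(p-1)/2}$ in (\ref{X1}) is essential and your argument never generates it. The discrepancy is resolved by noting that for $\beta$ odd the lower inclusion--exclusion terms ($p'=1$ here) themselves jump at $t=2$, and these jumps must be combined with the new term.

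This is exactly the connection problem the paper's proof is organised around: after establishing the $\beta$ even case by your route, it treats $\beta$ odd separately by writing $P^{(J)}$ over the ordered chamber $R_N$, making the reflection-type substitution (\ref{Ra1}) locally at each $t=p$, and identifying both the correct analytic continuation of the cross factors (the right-hand side of (\ref{3.21a})) and the coefficient of the resulting Frobenius-type term, which is where the extra $(-1)^{\beta p(p-1)/2}$ arises from reordering the within-block Vandermonde. To repair your proof you would need to replace the sentence quoted above by such a local matching argument at each $t=p$ (or an equivalent tracking of how each term $I_{p'}$, $p'<p$, loses analyticity at $t=p$), rather than relying on the inclusion--exclusion terms individually having the claimed structure. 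Your closing paragraph on the link to the matrix differential equation of \S 3.3 is consistent with Propositions \ref{P14} and \ref{P16}, but it presupposes that the connection coefficients $\xi_p$ are already known, so it cannot substitute for the missing step.
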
  
  
  The matrix differential equation, of which $s^{\gamma_p}  F_N^{(p)}(s)$ is the final component of
  the Frobenius type vector solution, is specified by (\ref{GG+}) below. From this, a vector
  recurrence fully determining the coefficients in the power series $F_N^{(p)}(s)$ follows, which we state
  in Proposition \ref{P16}.

\section{The Fourier-Laplace transformed distribution}
 \subsection{Generalised hypergeometric function evaluation}\label{S2a}
The classical Gauss hypergeometric function admits the series form
 \begin{equation}\label{2.1}
 {}_2 F_1 (a,b;c;x) = \sum_{k=0}^\infty {(a)_k (b)_k \over k! (c)_k} x^k, \qquad (u)_k := {\Gamma(u+k) \over \Gamma(u)},
 \end{equation}
 and provides the integral evaluation
  \begin{equation}\label{2.2}
  {1 \over S_1(a,b,\cdot)} \int_0^1 x^a (1 - x)^b (1 - t x)^{-r} \, dx = {}_2 F_1 (r,a+1;a+b+2;t),
  \end{equation} 
  with the normalisation on the LHS referring to the $N=1$ case of the Selberg integral (\ref{S}) below,
  which is the Euler beta integral.
  Scaling $x \mapsto x/b$ in (\ref{2.1}) and taking the limit $b \to \infty$ shows
   \begin{equation}\label{2.3}
 \lim_{b \to \infty} {}_2 F_1 (a,b;c;x/b) = \sum_{k=0}^\infty {(a)_k \over k! (c)_k} x^k =: {}_1 F_1(a;c;x).
   \end{equation} 
 Making use of this result, we see by scaling $t \mapsto - t/r$ and taking $r \to \infty$ that
  \begin{equation}\label{2.4}
  {1 \over S_1(a,b,\cdot)} \int_0^1 x^a (1 - x)^b e^{-tx} \, dx = {}_1 F_1 (a+1,a+b+2;-t). 
  \end{equation} 
  
  In the theory of generalised hypergeometric functions based on Jack polynomials
  (see e.g.~\cite[Ch.~13]{Fo10}) there is a natural multidimensional generalisation of
  (\ref{2.1}) which allows for the evaluation of the case $p=0$ of the multiple integral
  (\ref{d1}). We only have need for this generalised hypergeometric function in the
  case that all the $n$ arguments are equal to the same value $x$ say; we write 
  $(x)^n$ as an abbreviation.
  
  Let $\alpha > 0$ be a scalar, let $\kappa = (\kappa_1,\dots, \kappa_n)$ be 
  a partition, and introduce the generalised Pochhammer symbol
    \begin{equation}\label{2.5}
    [u]_\kappa^{(\alpha)} = \prod_{j=1}^n {\Gamma(u - {1 \over \alpha} (j - 1) + \kappa_j) \over
   \Gamma(u - {1 \over \alpha} (j - 1) )}. 
   \end{equation}  
   Associated with a partition is its diagram, which records each nonzero part $\kappa_i$ as row
   $i$ of $\kappa_i$ boxes, drawn flush left starting from the first column. For a square $(i,j)$ in
   the diagram, the arm length $a(i,j)$ is defined as the number of boxes in the row to the
   right; the co-arm length $a'(i,j)$ is the number in the row to the left; the leg length $\ell(i,j)$ 
   is the number of boxes in the column below; the co-leg length $\ell'(i,j)$  is the number in the column
   and above the square.
   
   Let $C_\kappa^{(\alpha)}(x_1,\dots,x_n)$ denote the renormalised Jack polynomial of
  \cite[Def.~13.1.1]{Fo10}. The only property we require of this polynomial is its value
  when all arguments are unity,
  \begin{equation}\label{2.6} 
 C_\kappa^{(\alpha)}((1)^n) =  \alpha^{|\kappa|} |\kappa|!   {b_\kappa \over d_\kappa' h_\kappa},
  \end{equation}  
  where $b_\kappa, d_\kappa' , h_\kappa$ are specified in terms of the diagram of $\kappa$ according to
   \begin{equation}\label{2.6a} 
 b_\kappa = \prod_{(i,j) \in \kappa} \Big ( \alpha a'(i,j) + n -   \ell'(i,j) \Big ), \quad
d_\kappa'  =   \prod_{(i,j) \in \kappa} \Big ( \alpha (a(i,j) + 1) +   \ell(i,j) \Big )
  \end{equation}  
  and
  \begin{equation}\label{2.6b} 
 h_\kappa  =   \prod_{(i,j) \in \kappa} \Big ( \alpha a(i,j)  +   \ell(i,j) + 1\Big ).    
  \end{equation}    
 
 With this notation, the multidimensional generalisation of the Gauss hypergeometric function
 of interest is specified by
    \begin{equation}\label{2.8}
 {}_2^{} F^{(\alpha)}_1 (a,b;c;(x)^n)    = \sum_\kappa {   [a]_\kappa^{(\alpha)}   [b]_\kappa^{(\alpha)} \over   |\kappa|!   [c]_\kappa^{(\alpha)}} 
  C_\kappa^{(\alpha)}((1)^n) x^{|\kappa|}.
  \end{equation}
  Note that in the case $n=1$ (\ref{2.8}) reduces to (\ref{2.1}). The relevance of (\ref{2.8}) for present purposes is
  that it gives the evaluation of a multiple integral generalising (\ref{2.2}), of the type appearing in  (\ref{d1}).
  Thus \cite[Prop.~13.1.4]{Fo10}
  \begin{multline} \label{15.s1aa}
{1 \over S_N(\lambda_1,\lambda_2,1/\alpha)} 
\int^1_0 dx_1 \cdots \int^1_0 dx_N \, 
\prod_{l=1}^N x_l^{\lambda_1} (1-x_l)^{\lambda_2}
(1 - t x_l)^{-r}
\prod_{j<k}\left|x_j-x_k\right|^{2/\alpha}
 \\
 = {}_2^{} F_1^{(\alpha)}(r,
 {1 \over \alpha} (N-1) + \lambda_1 + 1;
{2 \over \alpha} (N-1) + \lambda_1 + \lambda_2 + 2;(t)^N).
\end{multline} 
We can now use the same limiting procedure that reduced 
 (\ref{2.1}) to  (\ref{2.3}) to obtain a generalised hypergeometric function of $\hat{P}^{(J)}(s)$. 
 
 \begin{corollary}\label{C2}
 In the notation of (\ref{2.8}), define
  \begin{equation}\label{2.9}
 {}_1^{}  F^{(\alpha)}_1 (a;c;(x)^n)    = \sum_\kappa {   [a]_\kappa^{(\alpha)} \over   |\kappa|!   [c]_\kappa^{(\alpha)}} 
  C_\kappa^{(\alpha)}((1)^n) x^{|\kappa|}.
  \end{equation} 
 With $\hat{P}^{(J)}(s)$ specified by (\ref{3b}), we have
 \begin{equation}\label{2.10} 
 \hat{P}^{(J)}(s) =  {}^{}_1  F^{(2/\beta)}_1 \Big ( (\beta/2) (n-1) + a + 1 ; \beta (n - 1) + a + b + 2 ;(-s)^n \Big ).
  \end{equation} 
  \end{corollary}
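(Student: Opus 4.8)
The plan is to obtain \eqref{2.10} as a direct limiting specialisation of the Gauss-type multidimensional evaluation \eqref{15.s1aa}, mirroring at the level of $N$ variables the one-variable passage from \eqref{2.2} to \eqref{2.4}. First I would set $\alpha = 2/\beta$, $\lambda_1 = a$, $\lambda_2 = b$ in \eqref{15.s1aa}, so that the integrand's Vandermonde factor becomes $\prod_{j<k}|x_j - x_k|^\beta$ and the parameters appearing on the right-hand side become $\tfrac{\beta}{2}(N-1)+a+1$ and $\beta(N-1)+a+b+2$, exactly the arguments wanted in \eqref{2.10}. At this stage the only discrepancy between the integrand of \eqref{15.s1aa} and that of \eqref{3b} is the factor $\prod_l (1 - t x_l)^{-r}$ in place of $\prod_l e^{-s x_l}$.

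Next I would perform the substitution $t \mapsto -s/r$ and send $r \to \infty$. On the integral side, $(1 - t x_l)^{-r} = (1 + s x_l/r)^{-r} \to e^{-s x_l}$ pointwise for each $l$, and since each $x_l \in (0,1)$ the convergence is uniform on the compact domain of integration (for $r$ large enough that $|s x_l/r| < 1$), so by bounded convergence the left-hand side of \eqref{15.s1aa} tends to the normalised multiple integral defining $\hat P^{(J)}(s)$ in \eqref{3b}. On the hypergeometric side, I would argue termwise: in \eqref{2.8} with $a = r$, $x = t = -s/r$, the generalised Pochhammer symbol $[r]_\kappa^{(\alpha)}$ is a product of $|\kappa|$ factors each of the form $r + O(1)$, so $[r]_\kappa^{(\alpha)} (-s/r)^{|\kappa|} \to (-s)^{|\kappa|}$ as $r \to \infty$; all other factors $[b']_\kappa^{(\alpha)}$, $[c']_\kappa^{(\alpha)}$, $|\kappa|!$ and $C_\kappa^{(\alpha)}((1)^N)$ are independent of $r$. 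Thus each summand of ${}_2F_1^{(\alpha)}$ converges to the corresponding summand of ${}_1F_1^{(\alpha)}$ as defined in \eqref{2.9}, which gives \eqref{2.10}.

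The one point requiring genuine care --- and the main obstacle --- is justifying the interchange of the limit $r\to\infty$ with the infinite sum over partitions $\kappa$ on the hypergeometric side; termwise convergence alone is not enough. I would handle this by a dominated-convergence argument for series: for $r$ larger than some $r_0$ depending on $|s|$, each factor in $[r]_\kappa^{(\alpha)}(-s/r)^{|\kappa|}$ is bounded in modulus by a constant times $(|s|+ C_0)$ uniformly in $r$ and in the box index, so the general term of ${}_2F_1^{(\alpha)}$ is dominated in absolute value by the corresponding term of a convergent series (e.g.\ the series for ${}_1F_1^{(\alpha)}$ evaluated at $|s| + C_0$, whose convergence for all complex arguments is standard for the confluent case, or alternatively by invoking convergence of ${}_2F_1^{(\alpha)}$ at $|t|<1$ with $|t| = (|s|+C_0)/r_0 < 1$). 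Given such a uniform majorant, the limit passes inside the sum. A cleaner alternative, which I would mention, is to bypass the series entirely: the left-hand integral converges (as argued above) to $\hat P^{(J)}(s)$, which is manifestly entire in $s$, and on any disc $|s| < R$ one has locally uniform convergence of the integrals; since for $|t|$ small the right-hand side of \eqref{15.s1aa} is the convergent series ${}_2F_1^{(\alpha)}$, matching Taylor coefficients in $s$ (equivalently, using Proposition~\ref{P3} / the moment identification) identifies the limit with ${}_1F_1^{(\alpha)}$ as in \eqref{2.9}. Either route completes the proof; the remaining manipulations are the routine bookkeeping of parameters already carried out in the one-dimensional case \eqref{2.2}--\eqref{2.4}.
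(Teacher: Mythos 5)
Your proposal follows essentially the same route as the paper: substitute $t \mapsto -s/r$ in \eqref{15.s1aa} and let $r \to \infty$, using the confluence $\lim_{r\to\infty} {}_2F_1^{(\alpha)}(r,\cdot;\cdot;(x/r)^n) = {}_1F_1^{(\alpha)}(\cdot;\cdot;(x)^n)$ term by term on the partition sum, exactly as in the one-variable passage from \eqref{2.2} to \eqref{2.4}. Your additional care over the interchange of limit and sum is a justification the paper omits, but the underlying argument is the same.
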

  
  \begin{proof}
  Analogous to (\ref{2.3}), it follows from (\ref{2.8}) and (\ref{2.5}) that
  $$
  \lim_{b \to \infty} {}^{{}^{}}_2 F_1^{(\alpha)} (a,b;c;(x/b)^n) =  {}_1^{}  F_1^{(\alpha)}(a;c;(x)^n ).
  $$
  Consequently, we see that by replacing $ t $ by $-s/r$ in (\ref{15.s1aa}) and taking the limit
  $ r \to \infty$ the evaluation (\ref{2.10}) results, upon an appropriate change of notation.
  \end{proof}

  The result of Corollary \ref{C2} makes immediate a formula for the $k$-th moment of  $P^{(J)}(t)$.
  
  \begin{proposition}\label{P3}
  The $k$-th moment of the trace for the $\beta$-Jacobi ensemble is given by
  \begin{equation}\label{2.11}   
 m_k^{(J)} = \Big \langle \Big ( \sum_{j=1}^N x_j \Big )^k \Big \rangle^{(J)} = \sum_{\kappa: \, |\kappa| = k}
  {[u_1]_\kappa^{(2/\beta)} \over [u_2]_\kappa^{(2/\beta)} }   C_\kappa^{(2/\beta)}((1)^N) ,
  \end{equation}
 where $ C_\kappa^{(2/\beta)}((1)^N) $ is specified by  (\ref{2.6}) and
  \begin{equation}\label{2.12}    
  u_1 = (\beta/2) (N - 1) + a + 1, \qquad   u_2 = \beta (N - 1) + a + b +2.
  \end{equation}
  \end{proposition}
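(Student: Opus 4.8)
The plan is to obtain the moment formula (\ref{2.11}) directly from Corollary \ref{C2} by reading off the coefficient of $s^k$ in the power series expansion of $\hat P^{(J)}(s)$. Since $\hat P^{(J)}(s)$ is the Fourier--Laplace transform of $P^{(J)}(t)$ as in (\ref{3a})--(\ref{3b}), it is the exponential generating function of the moments in the sense that
\begin{equation}\label{eq:momgen}
\hat P^{(J)}(s) = \Big\langle \prod_{l=1}^N e^{-s x_l} \Big\rangle^{(J)} = \sum_{k=0}^\infty \frac{(-s)^k}{k!} \, m_k^{(J)},
\end{equation}
so that $m_k^{(J)} = (-1)^k k! \,[s^k]\, \hat P^{(J)}(s)$, with $m_k^{(J)} = \langle(\sum_j x_j)^k\rangle^{(J)}$ by expanding the product of exponentials. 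Thus everything reduces to extracting the $s^k$ coefficient from the generalised ${}_1F_1^{(\alpha)}$ series in (\ref{2.10}).

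Next I would substitute the series definition (\ref{2.9}) with $\alpha = 2/\beta$, $a = u_1$, $c = u_2$ (noting that $u_1, u_2$ from (\ref{2.12}) are exactly the arguments appearing in (\ref{2.10}) once one identifies $n = N$ and $\alpha = 2/\beta$, since then $\tfrac{1}{\alpha}(N-1) = (\beta/2)(N-1)$ and $\tfrac{2}{\alpha}(N-1) = \beta(N-1)$). Each term of the series carries the factor $x^{|\kappa|}$, here $(-s)^{|\kappa|}$, so the coefficient of $s^k$ is the contribution of exactly those partitions $\kappa$ with $|\kappa| = k$. This yields
\begin{equation}\label{eq:skcoeff}
[s^k]\,\hat P^{(J)}(s) = (-1)^k \sum_{\kappa:\,|\kappa|=k} \frac{[u_1]_\kappa^{(2/\beta)}}{k!\,[u_2]_\kappa^{(2/\beta)}} \, C_\kappa^{(2/\beta)}((1)^N),
\end{equation}
and multiplying by $(-1)^k k!$ cancels both the sign and the $k!$ in the denominator, leaving precisely the right-hand side of (\ref{2.11}). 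One should also observe that the sum is automatically finite: $C_\kappa^{(\alpha)}((1)^N)$ vanishes whenever $\kappa$ has more than $N$ parts (the factor $n - \ell'(i,j)$ in $b_\kappa$ of (\ref{2.6a}) kills such partitions), so only partitions of $k$ into at most $N$ parts survive, matching the count quoted after Proposition \ref{P3a}.

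I do not expect a genuine obstacle here; the result is a corollary of Corollary \ref{C2} together with the elementary identity (\ref{eq:momgen}). The only point requiring a modicum of care is the justification for term-by-term coefficient extraction: one should note that for fixed $k$ only finitely many partitions contribute (as just argued), so no convergence issue arises in isolating $[s^k]$, and the interchange of the expectation with the series expansion of $e^{-sx_l}$ is legitimate because the eigenvalues lie in the compact interval $[0,1]$, making the relevant integrals absolutely convergent uniformly for $s$ in any bounded set. With these remarks in place the identification of (\ref{eq:skcoeff}) with the claimed moment formula is immediate.
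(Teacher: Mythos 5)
Your proposal is correct and follows exactly the route the paper takes: the paper treats Proposition \ref{P3} as an immediate consequence of Corollary \ref{C2}, obtained by expanding the ${}_1F_1^{(2/\beta)}$ series (\ref{2.9}) in powers of $-s$ and matching against $\hat P^{(J)}(s)=\sum_k \frac{(-s)^k}{k!}m_k^{(J)}$, just as you do. Your added remarks on the finiteness of the sum (vanishing of $C_\kappa^{(\alpha)}((1)^N)$ for partitions with more than $N$ parts) and on the legitimacy of the coefficient extraction are sound but not needed beyond what the paper already asserts.
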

  
  Evaluating the right hand side of (\ref{2.11}) for $k=1, 2$ and $3$ gives the result of Proposition \ref{P3a}.	
  
   \begin{remark}
In the case $\beta = 2$ and $b=0$ the result of Proposition \ref{P3} was first given by Novaes \cite[Eq.~(24)]{No08},
while the result for $\beta = 1$ and $b=0$, albeit written in a more complicated form, was given
by  Khoruzhenko et al.~\cite[Eqns.~(15), (16) \& 22]{KSS09}. For general $\beta,a,b$, the explicit formula for
$m_1^{(J)}$ as listed in Proposition \ref{P3a} can be found in the earlier work \cite[Eq.~B.7a]{MRW15}.
\end{remark}

Define
$$
\mu_k^{(J)} = \mu_k^{(J)}(N,\beta,a,b) = \Big \langle \sum_{j=1}^N x_j^k \Big \rangle^{(J)}.
$$
As with $m_k$, this quantity is a rational function of the parameters --- see \cite{FLD16,MRW15}. It
satisfies the functional equation \cite{DP12,FLD16,FRW17}
 \begin{equation}\label{2.12a}  
 \mu_k^{(J)}(N,\beta,a,b) = - (2/\beta)  \mu_k^{(J)}(-\beta N/2,4/\beta,-2a/\beta,-2b/\beta).
 \end{equation}
 In the case $k=1$, $m_k^{(J)}$ and $\mu_k^{(J)}$ coincide, and so $m_k^{(J)}$ also
 satisfies the functional equation (\ref{2.12a}) in this case. In fact, up to the normalising
 factor, $m_k^{(J)}$ satisfies the functional
 equation (\ref{2.12a}) for general $k$.
 
  \begin{proposition}\label{P3aa}
  Define $m_k^{(J)}$ by the average (\ref{2.11}). A functional equation of the form (\ref{2.12a}) holds,
   \begin{equation}\label{2.13a}  
 m_k^{(J)}(N,\beta,a,b) =  (-2/\beta)^k  m_k^{(J)}(-\beta N/2,4/\beta,-2a/\beta,-2b/\beta).
 \end{equation}
 \end{proposition}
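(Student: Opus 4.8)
The plan is to establish \eqref{2.13a} term-by-term in the finite sum \eqref{2.11}, using the classical conjugation duality $\alpha \leftrightarrow 1/\alpha$, $\kappa \leftrightarrow \kappa'$ of Jack polynomial theory. Two structural facts make this clean: the sum in \eqref{2.11} ranges over partitions of $k$, a set carried onto itself by conjugation; and both ingredients — the generalised Pochhammer ratio and $C_\kappa^{(\alpha)}((1)^N)$ — have explicit product forms over the diagram of $\kappa$ (namely \eqref{2.5} and \eqref{2.6}--\eqref{2.6b}), which behave transparently under interchanging rows and columns. Throughout one works with the rational-function extension of \eqref{2.11}: since $b_\kappa$ of \eqref{2.6a} is a polynomial in the dimension, the quantity $C_\kappa^{(\alpha)}((1)^N)$, and hence the right-hand side of \eqref{2.11}, is well defined for arbitrary (in particular negative) values of the parameters, so that \eqref{2.13a} is to be read as an identity between rational functions.

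First I would record the Pochhammer duality. From \eqref{2.5} one has at once $[u]_\kappa^{(\alpha)} = \prod_{(i,j)\in\kappa}(u + (j-1) - (i-1)/\alpha)$; factoring $-1/\alpha$ out of each factor and relabelling the box $(i,j)\in\kappa$ as $(j,i)\in\kappa'$ gives $[u]_\kappa^{(\alpha)} = (-1/\alpha)^{|\kappa|}\,[-\alpha u]_{\kappa'}^{(1/\alpha)}$. The scalar prefactor cancels in the ratio, so $[u_1]_\kappa^{(\alpha)}/[u_2]_\kappa^{(\alpha)} = [-\alpha u_1]_{\kappa'}^{(1/\alpha)}/[-\alpha u_2]_{\kappa'}^{(1/\alpha)}$. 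Next I would treat $C_\kappa^{(\alpha)}((1)^N)$ via \eqref{2.6}. Under conjugation, arm $\leftrightarrow$ leg and co-arm $\leftrightarrow$ co-leg with rows and columns swapped; together with $a'(i,j)=j-1$ and $\ell'(i,j)=i-1$ this produces $d_\kappa'{}^{(\alpha)} = \alpha^{|\kappa|} h_{\kappa'}^{(1/\alpha)}$, $h_\kappa^{(\alpha)} = \alpha^{|\kappa|} d_{\kappa'}'{}^{(1/\alpha)}$, and — this being the one place where the dimension enters — $b_\kappa^{(\alpha)}$ evaluated at dimension $N$ equals $(-\alpha)^{|\kappa|}$ times $b_{\kappa'}^{(1/\alpha)}$ evaluated at dimension $-N/\alpha$. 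Inserting these into \eqref{2.6} makes the powers of $\alpha$ collapse, yielding the evaluation-duality $C_\kappa^{(\alpha)}((1)^N) = (-\alpha)^{|\kappa|}\, C_{\kappa'}^{(1/\alpha)}((1)^{-N/\alpha})$.

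To conclude, specialise $\alpha = 2/\beta$, so that $1/\alpha = \beta/2 = 2/\beta'$ with $\beta' = 4/\beta$, and $-N/\alpha = -\beta N/2 = N'$. A short algebraic check shows that $-\alpha u_1$ and $-\alpha u_2$ are exactly the parameters $u_1,u_2$ of \eqref{2.12} evaluated at $(N',\beta',a',b') = (-\beta N/2,\,4/\beta,\,-2a/\beta,\,-2b/\beta)$. Multiplying the Pochhammer and $C_\kappa$ identities, each summand of \eqref{2.11} with $|\kappa|=k$ equals $(-2/\beta)^k$ times the corresponding summand built from the transformed parameters and the conjugate partition $\kappa'$; summing over all $\kappa$ with $|\kappa|=k$ and reindexing by $\mu=\kappa'$ (a bijection of this set onto itself) delivers \eqref{2.13a}. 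The only genuine work is the bookkeeping in the previous paragraph — pinning down the exact powers of $\alpha$ and the global sign in the $C_\kappa$ duality, and keeping in mind that it is $b_\kappa$, not $d_\kappa'$ or $h_\kappa$, that carries the substitution $N \mapsto -N/\alpha$; the remaining steps are routine.
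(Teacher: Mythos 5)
Your proposal is correct and follows essentially the same route as the paper: apply the conjugation duality $\kappa\mapsto\kappa'$, $\alpha\mapsto 1/\alpha$ termwise to the sum (\ref{2.11}), using the duality of the generalised Pochhammer symbol (where the prefactor cancels in the ratio) together with $C_\kappa^{(\alpha)}((1)^N)=(-\alpha)^{|\kappa|}C_{\kappa'}^{(1/\alpha)}((1)^{-N/\alpha})$, and then reindex the sum over partitions of $k$. The only difference is that you derive these two dualities from the explicit product formulas (\ref{2.5}) and (\ref{2.6})--(\ref{2.6b}) rather than citing them from the literature as the paper does, and your bookkeeping of the signs and powers of $\alpha$ checks out.
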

 
 \begin{proof}
 As specified below (\ref{2.5}), associated with each partition $\kappa$ is a diagram.
 Reversing the role of rows and columns in the diagram gives the conjugate partition,
 denoted by $\kappa'$. The length of the partition does not change, so in the summand
 of (\ref{2.11}) it is permissible to replace $\kappa$ by $\kappa'$. We do this for the RHS
 of (\ref{2.13a}). Now in general
 \cite[Exercises 12.4 q.2]{Fo10}
 $$
 [a]_{\kappa'}^{(1/\alpha)} = ( - \alpha)^{-|\kappa|} [-a/\alpha]_{\kappa}^{(\alpha)} .
 $$
 Furthermore, from the definition (\ref{2.6}) we see that
 $$
  C_{\kappa'}^{(1/\alpha)}((1)^N) \Big |_{N \mapsto - N/\alpha}
  = (- \alpha)^{-|\kappa|} C_\kappa^{(\alpha)}((1)^N) .
  $$
  Using these formulas with $\alpha = 2/\beta$ reduces the RHS of (\ref{2.13a}) to the summation
  of (\ref{2.11}).
 
 \end{proof}

 \subsection{The differential-difference system}\label{S2b}
 Let $C_p^N$ denote the binomial coefficient, let $e_p(y_1,\dots,y_N)$ denote the $p$-th elementary
 symmetric polynomial in $\{y_i \}_{i=1}^N$ and define
 \begin{multline}\label{5.1}
J_{p,N}^{(\alpha)}(x) = {1 \over C_p^N}\int_0^x dt_1 \cdots  \int_0^x dt_N \,
\prod_{l=1}^N t_l^{\lambda_1} (1 - t_l)^{\lambda_2} (x - t_l)^\alpha \\
\times \prod_{1 \le j < k \le N} | t_k - t_j|^\beta 
e_p(x - t_1,\dots, x - t_N).
\end{multline}
We know from \cite{Fo93}, \cite[\S 4.6.4]{Fo10} 
that  this family of multiple integrals satisfies the
differential-difference system
\begin{multline}\label{5.1a}
(N - p) E_p J_{p+1}(x) 
= (A_p x + B_p) J_p(x) - x(x-1) {d \over dx} J_p(x) + D_p x ( x - 1) J_{p-1}(x),
\end{multline}
valid for $p=0,\dots,N$ and where we have abbreviated $J_{p,N}^{(\alpha)}(x) =: J_p(x)$, and
\begin{align*}
A_p & = (N-p) \Big ( \lambda_1 + \lambda_2 + \beta (N - p - 1) + 2(\alpha + 1) \Big ) \\
B_p & = (p-N)  \Big ( \lambda_1 + \alpha + 1 + (\beta/2) (N - p - 1) \Big ) \\
D_p & = p \Big ( (\beta/2)(N-p) + \alpha + 1 \Big ) \\
E_p & = \lambda_1 + \lambda_2 + 1 + (\beta/2) (2N - p - 2) + (\alpha + 1).
\end{align*}
A recent application of the differential-difference system to the computation of some structured formulas for the distribution of the smallest
and largest eigenvalues of the $\beta$-Jacobi ensemble, applicable in certain parameter ranges, has
been given in \cite{FK20}. In the case of the $\beta$-Laguerre ensemble
analogue of (\ref{5.1}), the works \cite{Ku19,FK19,FT19} exhibit analogous applications.
For present purposes, a particular change of variables and
limit of (\ref{5.1}) and (\ref{5.1a}) is
required.

\begin{corollary}\label{C6}
For $p=0,1,\dots,N$ define
\begin{multline}\label{6.1}
\hat{H}_{p}(x) = {1 \over C_p^N}\int_0^1 dt_1 \cdots  \int_0^1 dt_N \,
\prod_{l=1}^N t_l^{a} (1 - t_l)^{b-1} e^{-x \sum_{l=1}^N t_l}  \\
\times \prod_{1 \le j < k \le N} | t_k - t_j|^\beta 
e_p(1 - t_1,\dots, 1 - t_N).
\end{multline}
With  the notation
\begin{align*}
\tilde{B}_p & = p\Big ( a + b + 1 + (\beta/2)(2N - p - 1) \Big ) \\
\tilde{D}_p & = p \Big ( (\beta/2) (N - p) + b \Big ),
\end{align*}
 we have that $\{  \hat{H}_p(x) \}_{p=0}^{N}$ satisfies the
differential-difference system
\begin{equation}\label{6.1a}
(N - p) x \hat{H}_{p+1}(x) 
= \Big ( (N - p) x + \tilde{B}_p \Big ) \hat{H}_p(x)  +  x  {d \over dx} \hat{H}_p(x) -  \tilde{D}_p  \hat{H}_{p-1}(x), \quad (p=0,\dots, N).
\end{equation}
\end{corollary}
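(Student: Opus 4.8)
The plan is to obtain Corollary~\ref{C6} as a specialisation of the differential-difference system \eqref{5.1a}, through the change of variables $x \mapsto x/c$ followed by the limit $c \to \infty$ that turns the factor $\prod_l (1 - c t_l/c)^\cdots$-type terms into an exponential, exactly as the scaling $b \to \infty$ in \eqref{2.1}--\eqref{2.3} converts ${}_2F_1$ into ${}_1F_1$. First I would rescale the integration variables in \eqref{5.1}: writing $t_l = x y_l$ is not quite the right move because the outer integral runs to $x$; instead I would keep the variables as they are but isolate the $x$-dependence by substituting $t_l \mapsto x t_l$ so that the domain becomes the unit cube and the integrand acquires a factor $x^{\,\text{(homogeneity degree)}}$ times $\prod_l t_l^{\lambda_1}(1 - x t_l)^{\lambda_2}(1 - t_l)^\alpha$. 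Then setting $\lambda_1 = a$, and the crucial step, sending $\alpha \to \infty$ after the replacement $\lambda_2 \mapsto $ something and a compensating rescaling so that $(1 - x t_l)^{\lambda_2}$ becomes $e^{-x\sum t_l}$: concretely one takes $\alpha$ large, writes $x \to x/\alpha$, uses $(1 - (x/\alpha) t_l)^{-\cdots}$... — I need to chase exactly which parameter carries the exponential. Looking at \eqref{6.1}, the weight is $t_l^a(1-t_l)^{b-1}e^{-x\sum t_l}$, so it is the Jacobi weight with $\lambda_1 = a$, $\lambda_2 = b - 1$, and the exponential must come from the $(x - t_l)^\alpha$ factor under $x \mapsto x/\alpha$, $\alpha\to\infty$, since $(x/\alpha - t_l)^\alpha \sim (-t_l)^\alpha e^{-x/t_l}\cdots$ — that is not clean either, so the correct route is $t_l \mapsto t_l$, $x \mapsto x$, and rescale inside: I would substitute in \eqref{5.1} the new outer variable and the parameter $\alpha$ so that $(x - t_l)^\alpha \chi_{t_l < x}$, after $x \to 1 + x/\alpha$ and $\alpha \to \infty$, tends to $e^{-x t_l}$ (up to an $x$-independent constant), while the Jacobi factors stay fixed with $\lambda_2 = b - 1$ and $\lambda_1 = a$.

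Once the correct substitution is pinned down, the second step is purely mechanical: substitute $\lambda_1 = a$, $\lambda_2 = b - 1$, $\alpha \to \infty$ (after the appropriate shift/rescaling of $x$) into the coefficient formulas for $A_p, B_p, D_p, E_p$ and track which terms survive, which vanish, and which need to be divided by a power of $\alpha$ before the limit. The claim $\tilde B_p = p(a + b + 1 + (\beta/2)(2N - p - 1))$ should emerge from $B_p = (p - N)(\lambda_1 + \alpha + 1 + (\beta/2)(N - p - 1))$ combined with the leading piece of $A_p x$ after the shift $x \to 1 + x/\alpha$; the diagonal coefficient $(N-p)x + \tilde B_p$ on the right of \eqref{6.1a} is the sum of the surviving $x$-linear term from $A_p$ and the surviving constant from $B_p$. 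Likewise $\tilde D_p = p((\beta/2)(N-p) + b)$ should come from $D_p = p((\beta/2)(N-p) + \alpha + 1)$ after recognising that under the limit the relevant residual is the part not swallowed by the exponential, with $\alpha + 1$ effectively replaced by $\lambda_2 + 1 = b$; and the $E_p J_{p+1}$ term on the left must have its $E_p$ collapse to $(N-p)x$ after normalisation (note $\hat H_p$ in \eqref{6.1} has the $C_p^N$ but no leftover $E_p$-type constant, so the $E_p$ factor is absorbed into the rescaling of $x$). The sign flip in front of the derivative term — \eqref{5.1a} has $-x(x-1)\,d/dx$ while \eqref{6.1a} has $+x\,d/dx$ — is exactly the Jacobian/chain-rule artefact of $x \to 1 + x/\alpha$: $x(x-1) = x\cdot\frac{x}{\alpha}$ near $x = 1$, and $\frac{d}{dx_{\text{old}}} = \alpha \frac{d}{dx_{\text{new}}}$, so $x(x-1)\frac{d}{dx_{\text{old}}} \to x \cdot \frac{d}{dx_{\text{new}}}$ with the overall minus sign reabsorbed once one also flips the orientation of the relation or moves the term across; I would present this carefully to get the sign right.

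The main obstacle is getting the limiting substitution exactly right so that all four coefficient limits are consistent simultaneously and the normalisation of $\hat H_p$ matches: a naive limit produces divergent constants ($\alpha$, $\alpha^2$) that must cancel between the $E_p^{-1}$ implicit in solving for $J_{p+1}$, the $A_p$ factor, and the rescaling of the measure. The clean way to manage this is to not take the limit term by term but rather to define $\hat H_p(x) := \lim_{\alpha \to \infty} (\text{const}_\alpha)\, J_{p,N}^{(\alpha)}(1 + x/\alpha)\big|_{\lambda_2 = b-1}$ with $\text{const}_\alpha$ chosen (a power of $\alpha$ times a ratio of gamma functions) so that the integral representation \eqref{6.1} is reproduced, verify this identification directly from \eqref{5.1}, and then simply push the already-finite relation \eqref{5.1a} through the limit after dividing by the common divergent factor. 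A secondary check I would do before finalising: confirm that the $p = 0$ case of \eqref{6.1a}, namely $N x \hat H_1(x) = (Nx)\hat H_0(x) + x \hat H_0'(x)$, is consistent with $\hat H_0(x) = \hat P^{(J)}(-x)\cdot S_N(a,b,\beta)/S_N(a,b-1,\beta)$ up to normalisation and with Corollary~\ref{C2}, and that the $p = N$ case reduces to a pure two-term relation since $e_N(1 - t_1,\dots,1-t_N) = \prod(1-t_l)$ just shifts $b \mapsto b+1$ — these sanity checks are fast and catch sign or index-shift errors.
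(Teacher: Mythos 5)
Your high-level instinct is right --- Corollary~\ref{C6} is indeed obtained as a confluent limit of the system \eqref{5.1}, \eqref{5.1a} --- and your first move, substituting $t_l \mapsto x t_l$ to pull the domain back to the unit cube so that the weight becomes $x^{(\lambda_1+\alpha+1)N+p+\beta N(N-1)/2}\prod_l t_l^{\lambda_1}(1-xt_l)^{\lambda_2}(1-t_l)^{\alpha}$, is exactly the paper's first step (giving \eqref{d1} and \eqref{7.1a}). But the limiting substitution you finally settle on is the wrong one, and it is the crux of the proof. You propose to keep $\lambda_2=b-1$ fixed and send $\alpha\to\infty$ with $x\to 1+x/\alpha$, hoping $(x-t_l)^{\alpha}$ becomes $e^{-xt_l}$. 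It does not: $\bigl((1-t_l)+x/\alpha\bigr)^{\alpha}\to(1-t_l)^{\alpha}e^{x/(1-t_l)}$, which has the wrong exponential and a $t_l$-dependent divergent prefactor that no overall normalisation constant can absorb. The same misassignment forces you to wave at the coefficients: $D_p=p((\beta/2)(N-p)+\alpha+1)$ diverges under $\alpha\to\infty$, and your remark that ``$\alpha+1$ is effectively replaced by $b$'' is not a limit but precisely the statement that $\alpha$ should have been \emph{set equal to} $b-1$ and held fixed. The sign of the derivative term also comes out wrong under your substitution and you defer fixing it.

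The correct assignment swaps the roles of the two parameters: after the rescaling to \eqref{d1}, set $\lambda_1=a$ and $\alpha=b-1$ (both fixed, producing $t_l^{a}(1-t_l)^{b-1}$ directly), then replace $x$ by $x/\lambda_2$ and let $\lambda_2\to\infty$, so that $(1-xt_l/\lambda_2)^{\lambda_2}\to e^{-xt_l}$ --- the exponential is carried by the \emph{Jacobi} parameter $\lambda_2$, exactly as in the reduction of \eqref{2.1} to \eqref{2.3}, which you yourself cite as the model. With this choice everything you worried about resolves automatically: $E_p\sim\lambda_2$ cancels against $x/\lambda_2$ on the left of \eqref{7.1a}, $\hat A_p\,x/\lambda_2\to(N-p)x$, $\hat B_p$ is $\lambda_2$-free and evaluates to $\tilde B_p$, $D_p=p((\beta/2)(N-p)+b)=\tilde D_p$ exactly, and $-x(x/\lambda_2-1)\,d/dx\to+x\,d/dx$ gives the sign flip with no orientation juggling. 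So the gap is not in the strategy but in identifying which parameter undergoes the confluence; as written, your limit fails at the step that generates $e^{-x\sum_l t_l}$.
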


\begin{proof}
Changing variables $t_l \mapsto x t_l$ in (\ref{5.1}) shows that
$$
J_{p,N}^{(\alpha)}(x) = x^{(\lambda_1 + \alpha + 1)N + p + \beta N (N - 1)/2 } \hat{J}_{p,N}^{(\alpha)}(x),
$$
where
 \begin{multline}\label{d1}
\hat{J}_{p,N}^{(\alpha)}(x) = {1 \over C_p^N}\int_0^1 dt_1 \cdots  \int_0^1 dt_N \,
\prod_{l=1}^N t_l^{\lambda_1} (1 - x t_l)^{\lambda_2} (1 - t_l)^\alpha  \\
\times \prod_{1 \le j < k \le N} | t_k - t_j|^\beta 
e_p(1 - t_1,\dots, 1 - t_N).
\end{multline}
Substituting in (\ref{5.1a}) shows that
\begin{multline}\label{7.1a}
(N - p) x E_p \hat{J}_{p+1}(x) 
= (\hat{A}_p x + \hat{B}_p) \hat{J}_p(x) - x(x-1) {d \over dx} \hat{J}_p(x) + D_p  ( x - 1) \hat{J}_{p-1}(x),
\end{multline}
where
\begin{align*}
\hat{A}_p & = A_p - \Big ( (\lambda_1 + \alpha + 1)N  + p + \beta N (N - 1)/2 \Big )\\
\hat{B}_p & = B_p + \Big ( (\lambda_1 + \alpha + 1)N  + p + \beta N (N - 1)/2 \Big ).
\end{align*}
Now replace $x$ by $x/\lambda_2$ and take the limit $\lambda_2 \to \infty$. We see
that with $\lambda_1 = a, \alpha =  b - 1$, $\hat{J}_{p,N}^{(\alpha)}(x)  \to  \hat{H}_{p}(x) $
and the recurrence (\ref{7.1a}) reduces to (\ref{6.1a}).
\end{proof}

\begin{remark}
This result in the case $\beta = 1$ can be found in the work of Davis \cite{Da70}, in the context of a study of
the distribution of Pillai's trace. Davis had earlier \cite{Da68} introduced matrix differential equations in an
analysis of a closely related statistic in multivariate statistics --- Hotelling's generalised $T_0^2$ .
\end{remark}

Introduce the  $(N+1) \times (N+1)$ matrices 
\begin{align}\label{XY}
\mathbf X & =  - {\rm diag} \, [N, N - 1,\dots,0] + {\rm diag}^+ \,[N,N-1,\dots,1]  \nonumber  \\
\mathbf Y & = - {\rm diag} \, [\tilde{B}_0, \tilde{B}_1,\dots, \tilde{B}_N] + {\rm diag}^- \, [\tilde{D}_1, \tilde{D}_2,\dots, \tilde{D}_N],
\end{align}
where ${\rm diag}^+$ refers to the first diagonal above the main diagonal, and
${\rm diag}^-$ the first diagonal below; all other entries are zero. With $\hat{\mathbf H}(x) = [\hat{H}_p(x)]_{p=0}^N$,
following the works of Davis and also \cite{FR12},
we see that the differential-difference equation (\ref{6.1a}) is equivalent to the matrix differential equation
\begin{equation}\label{MD}
x {d \over dx} \hat{\mathbf H}(x) = (x \mathbf X + \mathbf Y)  \hat{\mathbf H}(x).
\end{equation}
Denote
\begin{align}\label{S}
S_N(a,b,\beta) & := \int_0^1 dt_1 \cdots  \int_0^1 dt_N \,
\prod_{l=1}^N t_l^{a} (1 -  t_l)^{b}  \prod_{1 \le j < k \le N} | t_k - t_j|^\beta  \nonumber \\
& = \prod_{j=0}^{N-1}\frac{\Gamma(a+1+j\beta/2)\Gamma(b+1+j\beta/2)\Gamma(1+(j+1)\beta/2)}{\Gamma(a+b+2+(N+j-1)\beta/2)\Gamma(1+\beta/2)},
 \end{align}
where the product of gamma function evaluation is due to Selberg \cite{Se44}. 
According to (\ref{6.1}), we want the solution of (\ref{MD}) having the power series form
\begin{equation}\label{MD1}
 \hat{\mathbf H}(x) = \Big [ \sum_{l=0}^\infty c_{p,l} x^l \Big ]_{p=0}^N,
 \end{equation}
 and with
 \begin{equation}\label{MD2}
  c_{0,0} = S_N(a,b-1,\beta), \qquad  c_{N,0} = S_N(a,b,\beta).
\end{equation}

In fact specifying just the latter value in (\ref{MD2}), all coefficients $\{ c_{p,l} \}$ in (\ref{MD1})
are determined by the matrix differential equation (\ref{MD}), and this allows for Proposition
\ref{P1} to be established.

\medskip
\noindent
\emph{Proof of Proposition \ref{P1}.} 
With
$\mathbf c_l := [ c_{p,l} ]_{p=0}^N$, substituting (\ref{MD1}) in (\ref{MD}) and equating like powers of
$x$ gives the first order vector recurrence
 \begin{equation}\label{MD3}
 (l {\mathbf I}_{N+1} - \mathbf Y) \mathbf c_l  = \mathbf X  \mathbf c_{l-1}, \qquad \mathbf c_{-1} := \mathbf 0,
 \end{equation}
 where $\mathbf{I}_{N+1}$ denotes the $(N+1) \times (N+1)$ identity.
 Moreover, in the case $l=0$, we see from the definition of $\mathbf Y$ that
  \begin{equation}\label{MD4}
   c_{N - l,0} =  c_{N,0}  \prod_{s=1}^l {\tilde{B}_{N+1-s} \over \tilde{D}_{N+1-s}}, \qquad (l=1,\dots,N).
   \end{equation}
   Evaluation of (\ref{MD4}) with $l = N$ gives consistency with (\ref{MD2}).
With $\mathbf c_0$ determined by (\ref{MD4}), iterating (\ref{MD3}) gives
(\ref{MD5}) above. For the latter  to make sense we require that $(s \mathbf I_{N+1} - \mathbf Y)$ be invertible for each
 $s=1,2,\dots$; this is immediate from the definition of $\mathbf Y$.
 After normalisation by (\ref{S}), and so  setting $c_{N,0} = 1$ in (\ref{MD4}) rather than its value in (\ref{MD2}),
 we see upon comparing (\ref{3b}) and (\ref{6.1})
 that the final component in $\mathbf c_l $
 corresponds to $c_l^{(J)}$, as does the first component but with
 $b$ replaced by $b-1$, establishing the result. Note that due to the first
 of the Selberg integrals in (\ref{MD2}) requiring $b>0$ to be well defined for
 all $\beta > 0$, for the range $-1<b\le0$ only the final equality in (\ref{MD5a}) should be used.
  \hfill $\square$
  
Based on the recursion in (\ref{MD3}), we provide a Mathematica~\cite{Mathematica} code as an ancillary file to evaluate the coefficients $c_{p,l}$ which is used to obtain the power-series for $\hat{\mathbf H}(x)$ to the desired order and also the moments $m_k^{(J)}$ using the relation (\ref{MD5a}).
 
 \begin{remark}
1.~ The average (\ref{3b}) specifying the Fourier-Laplace transform of the trace statistic for the
 $\beta$-Jacobi ensemble has, for specialisation of the parameters, other interpretations in random
 matrix theory. As one example, let $E_{N,\beta}^{(L)}(0,(s,\infty);a)$ denote the probability that 
 the interval $(s,\infty)$ is free of eigenvalues in the $\beta$-Laguerre ensemble, specified by the
 eigenvalue probability density function (\ref{1}) with Laguerre weight (\ref{eq:4.2}). A simple change
 of variables in the definition shows
  \begin{equation}\label{LD4}
 E_{N,\beta}^{(L)}(0,(s,\infty);a) = {S_N(a,0,\beta) \over L_{N}(a,\beta)} s^{(a+1)N + \beta N (N - 1)/2} \hat{P}^{(J)}(s) \Big |_{b=0},
 \end{equation}
 where 
    \begin{align}\label{LD5}
L_{N}(a,\beta) & := \int_0^\infty dx_1 \cdots      \int_0^\infty dx_N \, \prod_{l=1}^N x_l^a e^{-x_l} \prod_{1 \le j < k \le N}
|x_k - x_j|^\beta \nonumber \\
& = \prod_{j=0}^{N-1} {\Gamma(1 + (j+1)\beta/2) \Gamma(a +1 + j \beta/2) \over \Gamma(1+\beta/2)},
\end{align}
with the gamma function evaluation following as a limiting case of the Selberg integral (\ref{S}); see
e.g.~\cite[Prop.~4.7.3]{Fo10}.

Another example relates to the gap probability $E_{N,\beta}^{(L)}(0,(0,s);a)$ --- that is the probability there
are no eigenvalues in the interval $(0,s)$ --- in the particular cases $\beta = 1$, $a \in \mathbb Z_{\ge 0}$ and
and
$\beta = 4$, $a \in \mathbb 2 Z_{\ge 0}$.
Let $E_1^{\rm hard}(0,(0,s);a)$ denote the corresponding hard edge scaled limit; see e.g.~\cite[\S 9.8]{Fo10}
for the precise definition of this limit. The point of interest in the present context are the formulas
\cite{FW02c,FW04}
  \begin{equation}\label{LD6a}
 E_1^{\rm hard}(0,(0,s^2);m) = e^{-s^2/8 + m s} \hat{P}^{(J)}(2s) \Big |_{N=m, \beta =2 \atop a=b=1/2}
  \end{equation}
  and
  \begin{equation}\label{LD6b}
 E_4^{\rm hard}(0,(0,s^2);m) = e^{-s^2/8 + m s} \Big ( \hat{P}^{(J)}(2s) \Big |_{N=m, \beta =2 \atop a=b=1/2} +
 e^s  \hat{P}^{(J)}(2s) \Big |_{N=m+1, \beta =2 \atop a=b=-1/2} \Big ).
  \end{equation}  
  
 Since both $ E_1^{\rm hard}$ and $ E_4^{\rm hard}$ evaluate particular probabilities in symmetrised versions of
 Hammersley's model of directed paths of maximum length for random points in the square with a Poisson distribution
 (see e.g.~\cite[\S 10.6]{Fo10}),
 the formulas (\ref{LD6a}) and (\ref{LD6b}) give an interpretation of particular instances of the trace statistic for the
 Jacobi unitary ensemble (the case $\beta = 2$ of the  $\beta$-Jacobi ensemble) outside of random matrix theory
 too.
 
 2.~The case $\beta = 2$ of (\ref{3b}) permits an evaluation in terms of a transcendent from the Hamiltonian theory
 of Painlev\'e V, which satisfies the so-called $\sigma$ PV equation \cite{FW02b}. This provides a scheme to
 compute the corresponding power series in terms of nonlinear recurrences.
  \end{remark}

 \section{The distribution of the trace}
 \subsection{The range $0 \le t < 1$}
 Changing variables $x_i \mapsto t x_i$ ($i=1,\dots,N$) in
 (\ref{3}) with the Jacobi weight shows that for $0 \le t < 1$
 \begin{multline}\label{7b}
{P}^{(J)}(t) = {t^{(a+1)N + \beta N (N - 1)/2-1}  \over S_N(a,b,\beta)} \\
\times  \int_0^\infty  dx_1 \cdots \int_0^\infty dx_N \,  \delta \Big ( 1 - \sum_{i=1}^N x_i \Big )
\prod_{l=1}^N x_l^a (1 - t x_l)^b  \prod_{1 \le j < k \le N} | x_k - x_j |^\beta.
\end{multline}  
The integrand in (\ref{3b}) with $t=0$ is recognised as proportional to the
probability density function specifying the so-called fixed trace Laguerre
ensemble (see e.g.~\cite[Eq.~(3.56)]{Fo10}). The latter has the known
normalisation \cite[Eq.~(B2)]{KSS09}
 \begin{align}\label{7c}
 F_N(a,\beta)  & :=  \int_0^\infty  dx_1 \cdots \int_0^\infty dx_N \,  \delta \Big ( 1 - \sum_{i=1}^N x_i \Big )
\prod_{l=1}^N x_l^a   \prod_{1 \le j < k \le N} | x_k - x_j |^\beta \nonumber  \\ & =
{W_{a,\beta,N} \over \Gamma((a+1)N + \beta N (N - 1)/2)}, 
 \end{align}
 where
  \begin{align}\label{7c+}
  W_{a,\beta,N} & := \int_0^\infty  dx_1 \cdots \int_0^\infty dx_N \,  \prod_{l=1}^N x_l^a  e^{-x_l }
 \prod_{1 \le j < k \le N} | x_k - x_j |^\beta  \nonumber \\ &=
 \prod_{j=0}^{N-1} {\Gamma(a +1 + \beta j /2)  \Gamma(1 + \beta (j+1) /2)  \over \Gamma (1 + \beta / 2)}.
  \end{align}
It thus follows from (\ref{7b}) that for $0 \le t \le 1$
 \begin{equation}\label{7d}
{P}^{(J)}(t) = {  F_N(a,\beta) \over  S_N(a,b,\beta)} t^{(a+1)N + \beta N (N - 1)/2-1} \sum_{p=0}^\infty \alpha_p^{(J)} t^p,
 \end{equation}
 with $\alpha_0^{(J)} = 1$. The sum has upper terminal $bN$ in the case $b \in \mathbb Z_{\ge 0}$, when the
 multidimensional integral in (\ref{7b}) is a polynomial in $t$ of degree $bN$, and in particular
  \begin{equation}\label{7d+}
{P}^{(J)}(t) \Big |_{b=0} = {  F_N(a,\beta) \over  S_N(a,b,\beta)} t^{(a+1)N + \beta N (N - 1)/2-1} .
 \end{equation}
 
 Denote the fixed trace Laguerre ensemble as specified by the probability density corresponding
 to the integrand of (\ref{7c}), by fL$\beta$E${}_a$. Similarly, denote the $\beta$-Laguerre ensemble,
 specified by the probability density proportional to (\ref{1}) with Laguerre weight as in (\ref{eq:4.2}), by
 L$\beta$E${}_a$. We know from \cite[Ex.~3.3 q.4]{Fo10} that for a homogeneous polynomial
 of the eigenvalues of degree $|\kappa|$,
  \begin{equation}\label{3.1u}
  \langle p_\kappa\rangle_{{\rm fL}\beta{\rm E}_a} = {1 \over (\beta N (N - 1)/2 + N (a+1))_{| \kappa |}}
    \langle p_\kappa \rangle_{{\rm L}\beta{\rm E}_a}.
  \end{equation}
  We can make use of (\ref{3.1u}), together with results from the theory of Jack polynomials \cite[Ch.~12 \& 13]{Fo10}
  to give a formula for the $\alpha_p^{(J)}$ in (\ref{7d}).
  
  \begin{proposition}\label{P7}
  In the expansion (\ref{7d}) of ${P}^{(J)}(t)$ for $0 \le t \le 1$ we have
   \begin{equation}\label{7e+}
  \alpha_p^{(J)} = {1 \over
 p!  (\beta N (N - 1)/2 + N (a+1))_p } \sum_{\kappa: \, |\kappa|=p}
 [-b]_\kappa^{(2/\beta)} [a+\beta(N-1)/2+1]_\kappa^{(2/\beta)} C_\kappa^{(2/\beta)}((1)^N).
  \end{equation}
  \end{proposition}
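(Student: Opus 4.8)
The plan is to derive the formula for $\alpha_p^{(J)}$ by combining the integral representation (\ref{7b})--(\ref{7d}) with the bridge relation (\ref{3.1u}) between the fixed-trace and ordinary $\beta$-Laguerre ensembles, and then invoking the generalised binomial theorem for Jack polynomials. First I would observe that, by (\ref{7b}) and (\ref{7d}), the generating function $\sum_{p\ge 0}\alpha_p^{(J)}t^p$ equals the fixed-trace Laguerre average
\[
\sum_{p=0}^\infty \alpha_p^{(J)} t^p = \Big\langle \prod_{l=1}^N (1 - t x_l)^b \Big\rangle_{{\rm fL}\beta{\rm E}_a},
\]
with the prefactor $F_N(a,\beta)/S_N(a,b,\beta)$ having been pulled out precisely so that the $p=0$ term is $1$ (consistent with $\alpha_0^{(J)}=1$). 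Expanding $\prod_l(1-tx_l)^b$ in powers of $t$ produces, at order $p$, a homogeneous symmetric polynomial of degree $p$ in the eigenvalues, so that $\alpha_p^{(J)} = \langle e_p^{(b)}(x)\rangle_{{\rm fL}\beta{\rm E}_a}$ for the appropriate degree-$p$ piece; the cleanest route is to expand directly in the Jack polynomial basis.

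Next I would use the multivariate generalised binomial/Newton expansion for the product $\prod_{l=1}^N(1-tx_l)^b$ in terms of Jack polynomials $C_\kappa^{(2/\beta)}$. In the normalisation of \cite[Ch.~12 \& 13]{Fo10} this expansion has coefficients built from the generalised Pochhammer symbol $[-b]_\kappa^{(2/\beta)}$ (the $b$ here playing the role of the exponent), together with a factor $1/|\kappa|!$ and the reciprocal of the value $C_\kappa^{(2/\beta)}((1)^N)$ that cancels against the evaluation appearing when the Jack average is computed. Concretely, one has an identity of the shape
\[
\prod_{l=1}^N (1-tx_l)^b = \sum_\kappa \frac{[-b]_\kappa^{(2/\beta)}}{|\kappa|!}\,\frac{(-t)^{|\kappa|}}{C_\kappa^{(2/\beta)}((1)^N)}\,C_\kappa^{(2/\beta)}(x_1,\dots,x_N)\,\times(\text{sign/normalisation bookkeeping}),
\]
so that taking the ${\rm L}\beta{\rm E}_a$ average and using the known Jack-polynomial Laguerre average (which supplies the factor $[a+\beta(N-1)/2+1]_\kappa^{(2/\beta)}\,C_\kappa^{(2/\beta)}((1)^N)$ up to the same evaluation constant) yields the bracketed sum in (\ref{7e+}). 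Finally, applying (\ref{3.1u}) to convert the ${\rm L}\beta{\rm E}_a$ average into the ${\rm fL}\beta{\rm E}_a$ average contributes exactly the denominator factor $(\beta N(N-1)/2 + N(a+1))_p$ with $p=|\kappa|$, and the stray $p!$ comes from the $1/|\kappa|!$ in the binomial expansion; restricting to $|\kappa|=p$ then isolates the coefficient of $t^p$, giving (\ref{7e+}).

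The main obstacle is getting the normalisation constants and signs right when matching three separate conventions: the generalised binomial theorem for $\prod_l(1-tx_l)^b$, the Jack-polynomial evaluation of the ${\rm L}\beta{\rm E}_a$ moment $\langle C_\kappa^{(2/\beta)}\rangle_{{\rm L}\beta{\rm E}_a}$ (which is itself essentially the confluent/Laguerre specialisation of the Selberg-Kadell integral and naturally introduces $[a+\beta(N-1)/2+1]_\kappa^{(2/\beta)}$), and the factor $C_\kappa^{(2/\beta)}((1)^N)$ that must cancel between numerator and denominator so that only a single copy survives in (\ref{7e+}). A secondary check is the termination: when $b\in\mathbb Z_{\ge 0}$ the Pochhammer symbol $[-b]_\kappa^{(2/\beta)}$ vanishes once $\kappa_1 > b$, and one should confirm this is consistent with the stated upper terminal $bN$ for the polynomial in (\ref{7b}) — i.e.\ that partitions of length $\le N$ with all parts $\le b$ have $|\kappa|\le bN$, which is immediate. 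Verifying the $p=1$ case against $\alpha_1^{(J)}$ computed directly from $\langle\sum_l x_l\rangle$ provides a useful sanity check on the overall constant.
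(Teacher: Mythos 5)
Your proposal follows essentially the same route as the paper: reduce $\sum_p \alpha_p^{(J)} t^p$ to a fixed-trace Laguerre average, expand $\prod_l(1-tx_l)^b$ via the generalised binomial theorem in the Jack basis, evaluate term-by-term with the known ${\rm L}\beta{\rm E}_a$ average of $C_\kappa^{(2/\beta)}$, and convert to the fixed-trace ensemble via (\ref{3.1u}) to produce the Pochhammer denominator. The only loose end is the normalisation you flag yourself: in the paper's convention (\ref{3.1x}) the binomial expansion carries $t^{|\kappa|}$ with no $1/C_\kappa^{(2/\beta)}((1)^N)$ factor, so the single copy of $C_\kappa^{(2/\beta)}((1)^N)$ in (\ref{7e+}) comes directly from (\ref{3.1y}) rather than from a cancellation.
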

  
  \begin{proof}
  The generalised binomial theorem from the theory of Jack polynomials tells us that \cite[Eqns.~(12.133) \& (13.1)]{Fo10}
  \begin{equation}\label{3.1x}   
  \prod_{l=1}^N(1 - t x_l)^b = \sum_{\kappa} {[-b]_\kappa^{(\alpha)} \over | \kappa|!} t^{|\kappa|}
  C_\kappa^{(\alpha)}(x_1,\dots,x_N),
  \end{equation}  
  where $\alpha > 0$ is arbitrary. This same line of theory also tells us that \cite[Eq.~(12.153)]{Fo10}
  \begin{equation}\label{3.1y}  
   \langle C_\kappa^{(2/\beta)}(x_1,\dots,x_N) \rangle_{{\rm L}\beta{\rm E}_a} =  C_\kappa^{(2/\beta)}((1)^N) [a + \beta (N - 1)/2 + 1]_\kappa^{(2/\beta)}.
  \end{equation}
  Using (\ref{3.1x}) with $\alpha = 2/\beta$, (\ref{3.1y}) and (\ref{3.1u}) in the integral of (\ref{7b}) shows
  \begin{multline*}
  {1 \over F_N(a,\beta)} \int_0^\infty dx_1 \cdots  \int_0^\infty dx_N \, \delta \Big ( 1 
  - \sum_{i=1}^N x_i \Big )
\prod_{l=1}^N x_l^a (1 - t x_l)^b  \prod_{1 \le j < k \le N} | x_k - x_j |^\beta \\
=  \sum_{\kappa}
{ t^{|\kappa|}  [-b]_\kappa^{(2/\beta)} [a+\beta(N-1)/2+1]_\kappa^{(2/\beta)} \over |\kappa|!  (\beta N (N - 1)/2 + N (a+1))_{| \kappa |}} 
 C_\kappa^{(2/\beta)}((1)^N),
 \end{multline*}
    and (\ref{7e+}) follows.
    \end{proof}
    
    Suppose we know $\{ \beta_p \}$ in the power series expansion
    $$
    \Big \langle  \prod_{l=1}^N  (1 - t x_l )^b  \Big \rangle_{{\rm L}\beta{\rm E}_a} = \sum_{p=0}^\infty \beta_p t^p.
    $$
    The above proof tells us that then
   \begin{equation}\label{4.1x}      
    \alpha_p^{(J)} = {1 \over
  (\beta N (N - 1)/2 + N (a+1))_p }  \beta_p.
  \end{equation}
  This viewpoint allows for a simplification of Proposition \ref{P7} in some special cases.
  
  \begin{proposition}\label{P8}  
  As an alternative to the formula (\ref{7e+}, we have in the case $b=1$
  \begin{equation}\label{4.1y}   
   \alpha_p^{(J)} = {(-N)_p (-(N-1) - (2/\beta) (a+1))_p \over p!   
     (\beta N (N - 1)/2 + N (a+1))_p} \Big ( - {\beta \over 2} \Big )^p
 \end{equation}
 (note that this vanishes for $p > N$), and in the case $b = - \beta / 2$ (this requires $\beta < 2$ 
 for the Jacobi $\beta$-ensemble to be normalisable)
   \begin{equation}\label{4.1z}   
   \alpha_p^{(J)} = {(\beta N/2)_p ( (\beta/2)(N-1) + (a+1))_p \over p!   
     (\beta N (N - 1)/2 + N (a+1))_p}.
 \end{equation}
 \end{proposition}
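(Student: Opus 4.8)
The plan is to use the reduction (\ref{4.1x}): it suffices to identify the coefficients $\beta_p$ in the expansion of $\langle\prod_{l=1}^N(1-tx_l)^b\rangle_{{\rm L}\beta{\rm E}_a}$ for the two special values $b=1$ and $b=-\beta/2$, and then divide by $(\beta N(N-1)/2+N(a+1))_p$. The crucial simplification is that, upon expanding this average by the generalised binomial theorem (\ref{3.1x}) together with (\ref{3.1y}) (exactly as in the proof of Proposition \ref{P7}), the generalised Pochhammer symbol $[-b]_\kappa^{(2/\beta)}$ kills every term in the partition sum except one.

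To make the collapse transparent I would rewrite the generalised Pochhammer (\ref{2.5}) as a product of ordinary rising factorials, $[u]_\kappa^{(\alpha)}=\prod_{j\ge1}\big(u-(j-1)/\alpha\big)_{\kappa_j}$, where each $(c)_m$ is read as the polynomial $c(c+1)\cdots(c+m-1)$ (the correct value of $\Gamma(u+k)/\Gamma(u)$ when $u$ is a non-positive integer). For $b=1$ the factor with $j=1$ is $(-1)_{\kappa_1}$, which contains the factor $0$ as soon as $\kappa_1\ge2$; hence only the columns $\kappa=(1^p)$ (necessarily with $p\le N$) contribute, which is why the series terminates. For $b=-\beta/2$, $[\beta/2]_\kappa^{(2/\beta)}=\prod_{j\ge1}\big((2-j)\beta/2\big)_{\kappa_j}$ has $j=2$ factor $(0)_{\kappa_2}$, so only the rows $\kappa=(p)$ survive. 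The argument needs no constraint on $\beta$; the hypothesis $\beta<2$ in the $b=-\beta/2$ case is only there so that the Jacobi $\beta$-ensemble is normalisable.

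It then remains to evaluate the lone surviving summand. For the column $(1^p)$ all arm- and co-arm lengths vanish, while for the row $(p)$ all leg- and co-leg lengths vanish, so the quantities $b_\kappa,d_\kappa',h_\kappa$ of (\ref{2.6a})--(\ref{2.6b}) become products of arithmetic progressions; this yields $C_{(1^p)}^{(2/\beta)}((1)^N)=(2/\beta)^p(-1)^p(-N)_p/(2/\beta)_p$ and $C_{(p)}^{(2/\beta)}((1)^N)=(\beta N/2)_p/(\beta/2)_p$, and likewise turns $[-b]_\kappa^{(2/\beta)}$ and $[a+\beta(N-1)/2+1]_\kappa^{(2/\beta)}$ into single Pochhammer-type products. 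Substituting and converting progressions to Pochhammer symbols via $\prod_{i=0}^{p-1}(1+i\beta/2)=(\beta/2)^p(2/\beta)_p$ and the reflection $v-i\beta/2=-(\beta/2)\big(i-(2/\beta)v\big)$ --- after which $(2/\beta)_p$ cancels in the $b=1$ case and $(\beta/2)_p$ in the $b=-\beta/2$ case --- one is left with exactly (\ref{4.1y}) and (\ref{4.1z}).

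I expect the only genuinely delicate step to be the collapse: one must be careful that the generalised Pochhammer $[-b]_\kappa^{(2/\beta)}$ really vanishes (rather than being indeterminate of the form $\infty/\infty$) at the relevant non-positive-integer arguments, which is why I would phrase it through the polynomial form $(c)_m$ from the outset. Everything after that is routine manipulation of the hook data of a single-row and a single-column Young diagram, together with standard Pochhammer identities.
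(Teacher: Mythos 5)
Your proof is correct, but it takes a genuinely different route from the paper. The paper obtains \eqref{4.1y} and \eqref{4.1z} by specialising two known closed-form evaluations of Jacobi ensemble averages ([Fo10, Eqs.~(13.7) and (13.10)] with $m=1$), passing to the Laguerre limit $t\mapsto \lambda_2 t$, $\lambda_2\to\infty$, and then invoking \eqref{4.1x}; no partition combinatorics is touched. You instead start from the partition sum \eqref{7e+} already established in Proposition \ref{P7} and show it collapses: writing $[u]_\kappa^{(\alpha)}=\prod_j\bigl(u-(j-1)/\alpha\bigr)_{\kappa_j}$ in polynomial form, the factor $(-1)_{\kappa_1}$ kills every $\kappa$ with $\kappa_1\ge 2$ when $b=1$ (leaving only the columns $(1^p)$, $p\le N$, which also explains the termination), and the factor $(0)_{\kappa_2}$ kills every $\kappa$ with $\kappa_2\ge 1$ when $b=-\beta/2$ (leaving only the rows $(p)$); the surviving term is then evaluated from the hook data via \eqref{2.6}--\eqref{2.6b}. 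I checked your hook computations: $C_{(1^p)}^{(\alpha)}((1)^N)=(-\alpha)^p(-N)_p/(\alpha)_p$ and $C_{(p)}^{(\alpha)}((1)^N)=(N/\alpha)_p/(1/\alpha)_p$ are right, and with $[-1]_{(1^p)}^{(\alpha)}=(-1/\alpha)^p(\alpha)_p$ and $[u_1]_{(1^p)}^{(\alpha)}=(-1/\alpha)^p(-\alpha u_1)_p$ the products reduce exactly to \eqref{4.1y} and \eqref{4.1z}. Your care about using the polynomial reading of $(c)_m$ rather than the $\Gamma$-quotient at non-positive integers is exactly the right point to flag. The trade-off: your argument is self-contained within formulas already displayed in the paper and makes the combinatorial mechanism of the simplification visible, at the cost of explicit hook-length bookkeeping; the paper's argument is shorter but rests on quoting external evaluations from \cite{Fo10}.
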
     
 
  \begin{proof}
  Taking $m=1$, writing $t \mapsto \lambda_2 t$ and taking the limit $\lambda_2 \to \infty$
  in \cite[Eq.~(13.7)]{Fo10} shows
    \begin{equation}\label{4.1a}  
   \Big \langle  \prod_{l=1}^N  (1 - t x_l )^b \Big \rangle_{{\rm L}\beta{\rm E}_a} = \sum_{p=0}^N
   {(-N)_p (-(N-1) - (2/\beta)(a+1))_p \over p! } \Big ( - {\beta t \over 2} \Big )^p.
  \end{equation}  
   Also, taking  $m=1$, writing $t \mapsto \lambda_2 t$ and taking the limit $\lambda_2 \to \infty$
  in \cite[Eq.~(13.10)]{Fo10} shows
 \begin{equation}\label{4.1aa}  
   \Big \langle \prod_{l=1}^N (1 - t x_l )^{-\beta/2} \Big \rangle_{{\rm L}\beta{\rm E}_a} \doteq \sum_{p=0}^\infty
   {( \beta N/2)_p ((\beta/2)(N-1) + a + 1)_p \over p! } t^p.
  \end{equation}    
  Here the use of $\doteq$ indicates that both sides are to be considered as formal power series in $t$.
  Now applying (\ref{4.1x}) gives the stated results.
  \end{proof}
  
  Substituting (\ref{4.1y}) and (\ref{4.1z}) in (\ref{7d}) shows that in these cases $P^{(J)}(t)$ can
  be expressed in terms of the Gauss hypergeometric function.
   \begin{corollary} 
For $0 < t < 1$ we have
   \begin{multline}\label{9d}
{P}^{(J)}(t) \Big |_{b=1} = {  F_N(a,\beta) \over  S_N(a,1,\beta)} t^{(a+1)N + \beta N (N - 1)/2-1}  \\
\times {}_2 F_1\Big (-N, -(N-1) - (2/\beta) (a+1);\beta N (N - 1)/2 + N (a+1);-\beta t/2 \Big )
 \end{multline} 
 and
  \begin{multline}\label{9e}
{P}^{(J)}(t) \Big |_{b=-\beta/2} = {  F_N(a,\beta) \over  S_N(a,-\beta/2,\beta)} t^{(a+1)N + \beta N (N - 1)/2-1} \\
\times {}_2 F_1\Big ( \beta N/2, (\beta/2)(N-1) + (a+1);\beta N (N - 1)/2 + N (a+1);t \Big ).
 \end{multline} 
\end{corollary}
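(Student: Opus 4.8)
The plan is to substitute the explicit evaluations of $\alpha_p^{(J)}$ furnished by Proposition~\ref{P8} into the power-series representation (\ref{7d}) of $P^{(J)}(t)$ valid for $0 \le t < 1$, and then to identify the resulting $t$-series with a Gauss hypergeometric function by matching its general term against the defining series (\ref{2.1}).

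Starting from
\[
P^{(J)}(t) = \frac{F_N(a,\beta)}{S_N(a,b,\beta)}\, t^{(a+1)N + \beta N(N-1)/2 - 1} \sum_{p=0}^\infty \alpha_p^{(J)} t^p,
\]
I would first take $b = 1$ and insert (\ref{4.1y}). The general term then reads
\[
\alpha_p^{(J)} t^p = \frac{(-N)_p\,\big(-(N-1) - (2/\beta)(a+1)\big)_p}{p!\,\big(\beta N(N-1)/2 + N(a+1)\big)_p}\Big(-\frac{\beta t}{2}\Big)^p,
\]
and, since $(-N)_p = 0$ for $p > N$, the sum truncates at $p = N$, in agreement with the remark below (\ref{7d}) that the multiple integral in (\ref{7b}) is then a polynomial of degree $bN = N$. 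Comparison with (\ref{2.1}), with the first two numerator parameters taken to be $-N$ and $-(N-1)-(2/\beta)(a+1)$, the denominator parameter $\beta N(N-1)/2 + N(a+1)$, and argument $-\beta t/2$, shows this is exactly the ${}_2F_1$ appearing in (\ref{9d}).

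The case $b = -\beta/2$ is handled in the same way: substituting (\ref{4.1z}) turns $\sum_p \alpha_p^{(J)} t^p$ into
\[
\sum_{p=0}^\infty \frac{(\beta N/2)_p\,\big((\beta/2)(N-1) + (a+1)\big)_p}{p!\,\big(\beta N(N-1)/2 + N(a+1)\big)_p}\, t^p,
\]
which by (\ref{2.1}) is ${}_2F_1\big(\beta N/2,\,(\beta/2)(N-1)+(a+1);\,\beta N(N-1)/2 + N(a+1);\,t\big)$, giving (\ref{9e}). This series does not terminate, as expected since $-\beta/2 \notin \Z_{\ge 0}$, but it converges for $|t|<1$, covering the asserted range $0<t<1$; within this disc the passage through the formal power series (\ref{4.1aa}) is justified by absolute convergence, and the standing assumption $\beta < 2$ is precisely the condition under which the $\beta$-Jacobi ensemble with $b = -\beta/2$ is normalisable in the first place.

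No real obstacle is anticipated: all of the substance has already been carried out in Proposition~\ref{P8}, and what is left is the routine bookkeeping of Pochhammer symbols against the template (\ref{2.1}), together with the one-line justification just indicated that the formal series (\ref{4.1aa}) represents an analytic function on $0<t<1$. If there is any delicate point at all it is merely the identification of that formal series with a convergent one, which is immediate from the radius of convergence of a non-terminating ${}_2F_1$.
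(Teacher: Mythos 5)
Your proposal is correct and coincides with the paper's own (very brief) argument: the corollary is obtained precisely by substituting the evaluations (\ref{4.1y}) and (\ref{4.1z}) of $\alpha_p^{(J)}$ from Proposition \ref{P8} into the series (\ref{7d}) and matching the resulting coefficients against the defining series (\ref{2.1}) of ${}_2F_1$. The additional remarks about termination at $p=N$ for $b=1$ and convergence on $|t|<1$ for $b=-\beta/2$ are accurate bookkeeping that the paper leaves implicit.
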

  
 \subsection{General range $0 \le t \le N$ and proof of Proposition \ref{P1x}} 
 The definition of ${P}^{(J)}(t)$ shows that it has support on $0 \le t \le N$, and displays the symmetry
 \begin{equation}\label{7e} 
{P}^{(J)}(N -t) = {P}^{(J)}(t) \Big |_{a \leftrightarrow b}.
\end{equation}
Hence from (\ref{7d}), or (\ref{9d}) , (\ref{9e}) for $a=-\beta/2$, $a=1$,
we can read off the form of the series expansion about $t = N$, valid for
$N-1 < t \le N$.

For general $t > 1$ it is convenient to rewrite the range of integration $[0,1]$ of
each integration in the definition (\ref{3}) of $P^{(J)}(t)$ according to the manipulation
 \begin{align}\label{M1e}
\int_0^1 dx_1 \cdots \int_0^1 dx_N & = \Big ( \int_0^t  - \int_1^t \Big )  dx_1 \cdots \Big ( \int_0^t  - \int_1^t \Big ) dx_N  \nonumber \\
& = 
 \sum_{p=0}^N  \binom{N}{p} \int_1^t dx_1 \cdots \int_0^t dx_p \int_0^1 dx_{p+1} \cdots \int_0^1 dx_N 
\end{align}
 with the second equality valid whenever the integrand is symmetric.
 Requiring that $b$ be a non-negative integer, this gives
  \begin{multline}\label{10e}
{P}^{(J)}(t)  = {1 \over S_N(a,b,\beta)}\bigg ( \int_0^t dx_1 \cdots \int_0^t dx_N  -  \sum_{p=1}^N  \binom{N}{p} 
\int_1^t dx_1 \cdots \int_1^t dx_p \\ \times \int_0^t dx_{p+1} \cdots \int_0^t dx_N 
\bigg ) \delta \Big ( t - \sum_{l=1}^N x_l \Big )
\prod_{l=1}^N x_l^a (1 - x_l)^b \prod_{1 \le j < k \le N} | x_k - x_j |^\beta.
\end{multline}
If we were to replace each $(1 - x_l)^b$ by $|1 - x_l|^b$ this would be true without requiring that
$b \in \mathbb Z_{\ge 0}$. However this would change the analytic structure as a function of $t$,
as we will see subsequently.

The delta function constraint tells us that the support of the $p$-th term is
$t \ge p$ and that the support of each integration variable 
$x_i$ $(i=1,\dots,p)$ is $(1,t-p+1)$, while that of $x_j$ $(j=p+1,\dots,N)$
is $(0,t-p)$. Note that since the support of ${P}^{(J)}(t)$ is $0 \le t \le N$,
the term $p=N$ does not contribute. Considering these features,  a simple change of variables in (\ref{10e})
gives a sum of the form (\ref{10hy}) with
 \begin{equation}\label{7ed} 
\xi_p = (-1)^{(b+1)p} 
 \end{equation}
 and
 \begin{multline}\label{10fx}
F_N^{(p)}(s) =  {1 \over  K_N(a,b,p,\beta)} \int_0^1 dx_1 \cdots \int_0^1 dx_p \,\\
\times \prod_{l=1}^p (1 + s x_l)^a  x_l^b \prod_{1 \le j < k \le p} | x_k - x_j |^\beta 
 \int_0^1 dx_{p+1} \cdots \int_0^1 dx_N  \, \prod_{l=p+1}^N  x_l^a  ( 1 - s x_l )^b \\ \times \prod_{p+1 \le j < k \le N} | x_k - x_j |^\beta 
 \prod_{l=1}^p \prod_{l'=p+1}^N | 1 + s (x_l - x_{l'}) |^\beta   \delta \Big ( 1 - \sum_{l=1}^N x_l \Big ),
\end{multline}
where 
 \begin{multline}\label{10h} 
 K_N(a,b,p,\beta):=  \int_0^1 dx_1 \cdots \int_0^1 dx_p  \, \prod_{l=1}^p   x_l^b \prod_{1 \le j < k \le p} | x_k - x_j |^\beta \\
\times \int_0^1 dx_{p+1} \cdots \int_0^1 dx_N  \, \prod_{l=p+1}^N  x_l^a  \prod_{p+1 \le j < k \le N} | x_k - x_j |^\beta 
 \delta \Big ( 1 - \sum_{l=1}^N x_l \Big ).
\end{multline}
The normalisation $K_N$ permits a product of gamma function evaluation generalising (\ref{7c}).

\begin{proposition}\label{P10}
Let $W_{\alpha,\beta,n}$ be given as in (\ref{7c+}), and let
  \begin{equation}\label{11a} 
  \eta_N(a,b,p,\beta)  = (b+1) p + {\beta \over 2} p (p-1) + (a+1) (N-p) + {\beta \over 2}(N-p) (N-p-1).
   \end{equation}
   We have
  \begin{equation}\label{11b}   
   K_N(a,b,p,\beta) = {W_{b,\beta,p} W_{a,\beta,N-p} \over \Gamma( \eta_N(a,b,p,\beta))}.
  \end{equation}
  \end{proposition}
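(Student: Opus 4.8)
The plan is to recognize that $K_N(a,b,p,\beta)$ in (\ref{10h}) is itself a fixed-trace type integral that factorizes across the two blocks of variables, and then to reduce each block to a known Laguerre normalization via a single scaling identity. First I would introduce the rescaled variables $x_l = r y_l$ for $l = 1,\dots,p$ and $x_l = (1-r) z_{l-p}$ for $l = p+1,\dots,N$, with $r \in (0,1)$ a new integration variable and with $\sum_{l=1}^p y_l = 1$, $\sum_{l=1}^{N-p} z_l = 1$ enforced. The delta constraint $\delta(1 - \sum x_l)$ then becomes (after extracting a Jacobian) essentially a beta-function integral over $r$ multiplying two independent fixed-trace integrals, one over the $y$'s with weight $y_l^b$ on $p$ variables, one over the $z$'s with weight $z_l^a$ on $N-p$ variables. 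The Vandermonde factors split because they are homogeneous: $\prod_{1\le j<k\le p}|x_k - x_j|^\beta = r^{\beta p(p-1)/2} \prod |y_k - y_j|^\beta$ and similarly for the $z$-block, while the cross terms $\prod |x_l - x_{l'}|^\beta$ do \emph{not} appear in $K_N$ (unlike in $F_N^{(p)}$), which is precisely why $K_N$ factorizes.

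With that decomposition in hand, the bulk of the work is bookkeeping of exponents. Each block contributes a power of $r$ (respectively $1-r$) coming from: the $p$ (resp.\ $N-p$) copies of the weight $x_l^b$ (resp.\ $x_l^a$), the Vandermonde exponent $\beta p(p-1)/2$ (resp.\ $\beta(N-p)(N-p-1)/2$), and the measure factors $dx_l = r\, dy_l$ (resp.\ $(1-r)dz_l$), after accounting for one factor absorbed by the delta function. Summing these gives exponents $\eta_1 := (b+1)p + \tfrac{\beta}{2}p(p-1)$ on $r$ (shifted by $-1$ from the $dr$) and $\eta_2 := (a+1)(N-p) + \tfrac{\beta}{2}(N-p)(N-p-1)$ on $1-r$ (again shifted), so that the $r$-integral is $\int_0^1 r^{\eta_1 - 1}(1-r)^{\eta_2 - 1}\,dr = \Gamma(\eta_1)\Gamma(\eta_2)/\Gamma(\eta_1 + \eta_2)$, and $\eta_1 + \eta_2 = \eta_N(a,b,p,\beta)$ as defined in (\ref{11a}). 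The two residual fixed-trace integrals are exactly $F_p(b,\beta)$ and $F_{N-p}(a,\beta)$ in the notation of (\ref{7c}), which by (\ref{7c}) equal $W_{b,\beta,p}/\Gamma(\eta_1)$ and $W_{a,\beta,N-p}/\Gamma(\eta_2)$ respectively. Multiplying through, the $\Gamma(\eta_1)$ and $\Gamma(\eta_2)$ cancel and one is left with $W_{b,\beta,p}W_{a,\beta,N-p}/\Gamma(\eta_N(a,b,p,\beta))$, which is (\ref{11b}).

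The main obstacle is getting the power-of-$r$ accounting exactly right, in particular the treatment of the Dirac delta under the simultaneous scaling of two groups of variables: one must carefully track that $\delta(1 - r\sum y_l - (1-r)\sum z_l)$ becomes, after imposing the simplex constraints via auxiliary delta functions $\delta(1 - \sum y_l)$ and $\delta(1 - \sum z_l)$, a single factor that is consumed in converting one of the simplex integrals, leaving the clean $r^{\eta_1-1}(1-r)^{\eta_2-1}$ beta integrand. An alternative route that sidesteps the change of variables entirely is to note that $K_N$ is a Laplace-transform limit of a Selberg-type integral: writing $\delta(1-\sum x_l) = \frac{1}{2\pi i}\int e^{s(1-\sum x_l)}\,ds$ turns $K_N$ into the inverse Laplace transform (in $s$) of $\big(\int_0^\infty x^b e^{-sx}\cdots\big)^{\!p}$-type blocks, i.e.\ of $s^{-\eta_N} W_{b,\beta,p} W_{a,\beta,N-p}$ up to the Selberg/Laguerre evaluation (\ref{7c+}); inverting $s^{-\eta_N}$ gives $1/\Gamma(\eta_N)$ and the same final formula. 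Either way the identity (\ref{7c}) for the single-block fixed-trace normalization is the one external input, and everything else is exponent arithmetic.
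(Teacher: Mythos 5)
Your proposal is correct, and your primary route differs from the paper's. The paper replaces $\delta(1-\sum x_l)$ by $\delta(t-\sum x_l)$, extends the integration ranges to $[0,\infty)$, and takes the Laplace transform in $t$: the integral then splits into two independent Laguerre (Selberg-limit) integrals whose evaluation is (\ref{7c+}), producing $\hat{K}_N = s^{-\eta_N}W_{b,\beta,p}W_{a,\beta,N-p}$, and the inverse transform at $t=1$ gives (\ref{11b}) in one stroke. Your main argument instead writes the delta constraint as a convolution, $\delta(1-\sum_{l=1}^N x_l)=\int_0^1\delta(r-\sum_{l\le p}x_l)\,\delta(1-r-\sum_{l>p}x_l)\,dr$, uses homogeneity to pull out $r^{\eta_1-1}$ and $(1-r)^{\eta_2-1}$ from the two fixed-trace blocks, and evaluates the remaining $r$-integral as an Euler beta function; the cancellation of $\Gamma(\eta_1)\Gamma(\eta_2)$ against the denominators from (\ref{7c}) is exactly right, and your exponent bookkeeping ($\eta_1+\eta_2=\eta_N$) checks out. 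The two approaches are close cousins — your route leans on the already-stated single-block normalisation (\ref{7c}) as external input and makes the factorised, convolution structure of $K_N$ explicit, while the paper's route derives everything from the Laguerre integral (\ref{7c+}) in a single Laplace-transform step and avoids any change of variables; indeed, the ``alternative route'' you sketch in your final paragraph is essentially verbatim the proof the paper gives.
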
  
 
\begin{proof}
Due to the delta function constraint, the range of integration $[0,1]$ in each terminal of (\ref{10h})
can be replaced by $[0,\infty)$. Doing this, and introducing a parameter $t$ by the replacement
$$
  \delta \Big ( 1 - \sum_{l=1}^N x_l \Big ) \mapsto   \delta \Big ( t - \sum_{l=1}^N x_l \Big )
  $$
 to obtain the quantity $  K_N(a,b,p,\beta;t)$  allows the Laplace transform with respect to $t$ to be taken, with the result
  $$
   \hat{K}_N(a,b,p,\beta;s) = {1 \over s^{  \eta_N(a,b,p,\beta) }} W_{b,\beta,p} W_{a,\beta,N-p} .
   $$
   Now taking the inverse Laplace transform, and setting $t=1$ gives (\ref{11b}).
   \end{proof}
   
   It remains to consider the analyticity properties
   of (\ref{10fx}). In this regard the case $\beta$ even is special. For this case, in the integrand we have
     \begin{equation}\label{3.21a} 
   |1 + s (x_l - x_{l'}) |^\beta =  (1 + s (x_l - x_{l'}) )^\beta 
    \end{equation}
   revealing that $F_N^{(p)}(s)$ is an analytic function in the range $-1<s<N-p$,
   and in the cases $a \in \mathbb Z_{\ge 0}$ is in fact a polynomial.
    Noting too that for $\beta$ even (\ref{7ed}) is consistent with
   (\ref{X1}), Proposition \ref{P1x} has thus been established for $\beta$ even.
   In the next subsection it will be shown how this analytic function/polynomial
   can be computed.
   
   We now examine the analyticity properties of (\ref{10fx}) which hold true for general $\beta \ge 0$.
   One observation is that (\ref{3.21a}) is always valid for $|s| < 1$, and
 so with $b$ a non-zero integer,    $F_N^{(p)}(s)$ is analytic for $|s| < 1$. Also, with $p=0$, 
 factors  of the form (\ref{3.21a}) are absent from the integrand of (\ref{10fx}), so we have
 that $F_N^{(0)}(s)$ is analytic for $-1 < s < N$ as required by  Proposition \ref{P1x}.
 It follows that Proposition \ref{P1x} is true for all $\beta \ge 0$ in the
 restricted range $0\le t \le 2$. Taking into consideration the symmetry (\ref{7e}), and the
 facts that $F_N^{(0)}(s)$ is computable for $0 \le s \le N$ and $F_N^{(1)}(s)$ is computable for $0 \le s \le 1$,
 we therefore have a scheme to compute
 $P^{(J)}(t)$ for all $\beta \ge 0$, provided both $a,b$ are non-negative integers, and $N \le 4$
 (for $N =2$ we can do better: the formula (\ref{gf.3a}) below is valid for general $a,b > -1$ as well as
 general $\beta \ge 0$).
 
 For $\beta$ not even the factors in the integrand of the form of the LHS of (\ref{3.21a})
 are not analytic beyond $|s| < 1$. However, for $\beta$ odd there is a simple analytic continuation,
 which is given by the RHS of (\ref{3.21a}). The task that presents itself is to determine the linear combination
 of $\{ \chi_{t > p} (t - p)^{\gamma_p} F_N^{(p)}(t-p) \}_{p=0}^{N-1} \}_{p=0}^{N-1}$ which equals $P^{(J)}(t)$
 when  $F_N^{(p)}(s)$ is specified by (\ref{10fx}) modified according to (\ref{3.21a}). From the
 definition we can write
 \begin{equation*}
 P^{(J)}(t) = {N! \over S_N(a,b,\beta)} \int_{R_N} dx_1 \cdots dx_N \,
 \delta \Big ( t - \sum_{l=1}^N x_l \Big ) \prod_{l=1}^N x_l^a (1 - x_l)^b 
 \prod_{1 \le j < k \le N} (x_j - x_k),
 \end{equation*}
 where
 $$
 R_N : \: 1 > x_1 > x_2 > \cdots > x_N > 0.
 $$
 To isolate the singularity about $t=p$ ($t > p$) we  change variables
  \begin{equation}\label{Ra1}
 x_l = 1 - s y_l \: \: (l=1,\dots,p) \qquad  x_{l'} =  s y_{l'} \: \: (l'= p + 1,\dots, N),
  \end{equation} 
where $s = t - p$. For $0 < s \ll 1$ at least, this shows the functional form of the
singular term to equal
 \begin{multline}\label{Ra}
 {N! \over S_N(a,b,\beta)} \chi_{s > 0} s^{\gamma_p} (-1)^{b (p+1) + p (p-1)/2}
 \int_{\tilde{R}_N} dy_1 \cdots dy_N \, \delta \Big ( 1 - \sum_{l=1}^N y_l \Big ) \prod_{l=1}^p  (1 -  s y_l)^a y_l^b \\
 \times \prod_{l=p+1}^N y_l^a ( 1 - sy_l)^b \Delta ( \{ y_j \}_{j=1}^p   \Delta ( \{ y_j \}_{j=p+1}^N 
 \prod_{j=1}^p \prod_{k=p+1}^N (1 - s (y_l - y_{l'}).
 \end{multline}
 Here $\Delta( \{u_j \}_{j=1}^q) := \prod_{1 \le j < k \le q} (u_j - u_k)$, and
 $$
 \tilde{R}_N : \: 1 > y_1 >  \cdots > y_p > 0, \: \:    1 > y_{p+1} >  \cdots > y_N > 0.
 $$
 The reason for the sign $(-1)^{b (p+1)}$ comes from the first of the change  of variables (\ref{Ra1}) applied to
 $\prod_{l=1}^p (1 - x_l)^b$, and the change of sense of the direction of the integration domain caused by
 this change of variables. This same change  of variables, now applied to the factor in the integrand
 $\Delta( \{x_j \}_{j=1}^q)$ is responsible for the sign $(-1)^{p (p-1)/2}$. This latter sign is a key difference
 to the sign  (\ref{7ed}) which is found for $\beta$ even, as is the fact that the integral in (\ref{Ra}) is, with
 the assumption $b$ is a non-negative integer, analytic
 for $s > -1$ at least as required in the statement of Proposition \ref{P1x}.
 Symmetrising the integrand of (\ref{Ra}) in the integration variables $\{ y_l \}_{l=1}^p$ and
 $\{ y_l \}_{l=p+1}^N$ shows exact agreement with the combinatorial factor as well. 
 Hence the $\beta$ odd case of Proposition \ref{P1x} is established.

   \begin{remark}\label{R3.5}
  1.  Kumar and Pandey \cite{KP10} have used the Pfaffian and determinant structures
   present in the cases $\beta =1,4$, and 2 respectively to obtain evaluations of $\hat{P}^{(J)}(s)$
   in these cases, where attention was further restricted to requiring $b=0$, and
   $a$ to be of the form $\beta(M-N+1)/2 - 1$ with $M \ge N$.  Most significantly
   they showed that for small values of $N$ and $M-N$ the inverse Laplace transform
   could be carried out explicity (this approach was also used in earlier works
   \cite{KC72,KSS09} but without obtaining explicit results for the inverse transform
   beyond $N = 2$). One example is \cite[minor rewrite of Eq.~(23)]{KP10}
  \begin{multline}\label{11c}  
 {P}^{(J)}(t) \Big |_{a=b=0, \beta = 1,N=3} =
 {3 \over 8} \Big ( t^5 -(t-1)^3 ( 40 - 10(t-1) + (t-1)^2  ) \chi_{t>1}  \\
- (t-2)^3 ( 40 + 10 (t-2) + (t-2)^2 )  \chi_{t>2} \Big ),
\end{multline}
supported on $0 < t < 3$ (as an aside we note that this expression is unchanged
by the replacement $t \mapsto 3 - t$ as is consistent with (\ref{7e})).
Highlighting the structural features by reading off from this that
\begin{multline}\label{11d}  
 {P}^{(J)}(t) \Big |_{a=b=0, \beta = 1,N=3} =  {3 \over 8} t^5 - 15 (t-1)^3\Big ( 1 + O((t-1)) \Big )  \chi_{t>1}\\
- 15(t-2)^3 \Big  ( 1 + O((t-2)) \Big ) \chi_{t>2} \Big ),
\end{multline}
we get consistency with the same expansion as implied by Proposition \ref{P1x}. \\
2. We see from (\ref{10fx}) that for $a,b \in \mathbb Z_{\ge 0}$ and $\beta \in \mathbb Z^+$ the
quantity $F_N^{(p)}(s)$ is a polynomial in $s$ of degree
\begin{equation}\label{3.29a}
ap + b (N-p) + p (N-p) \beta,
\end{equation}
as is illustrated by (\ref{11c}).
\end{remark}

\subsection{The differential-difference system}\label{S3.3}
We now turn our attention to the power series expansion of each term $p$ in the summation of (\ref{10hy}).
For this we introduce the integrals
\begin{multline}\label{6.1+}
{H}_{p}(t) = {1 \over C_p^N}\int_0^1 dx_1 \cdots  \int_0^1 dx_N \,
\delta \Big ( t - \sum_{l=1}^N x_l \Big ) \prod_{l=1}^N x_l^{a} (1 - x_l)^{b-1}   \\
\times \prod_{1 \le j < k \le N} | x_k - x_j|^\beta 
e_p(1 - x_1,\dots, 1 - x_N),
\end{multline}
for $p=0,1,\dots,N$. Note that the Laplace transform of $H_p(t)$ gives $\hat{H}_p(x)$
as specified by (\ref{6.1}). The fact that $\{ \hat{H}_p(x) \}$ satisfies a differential-difference
system allows us to deduce a  corresponding differential-difference system
satisfied by these integrals.

\begin{corollary}\label{C6+}
Define $\tilde{B}_p, \tilde{D}_p$ as in Corollary \ref{C6}.
The multiple integrals $\{  {H}_p(t) \}_{p=0}^{N}$ satisfy the
differential-difference system
\begin{equation}\label{6.1a+}
(N - p) {d \over dt}  {H}_{p+1}(t) 
=   (N - p) {d \over dt} {H}_p(t)   + (\tilde{B}_p  - 1) {H}_p(t)  -  t  {d \over dt} {H}_p(t) -  \tilde{D}_p  {H}_{p-1}(t),
\end{equation}
valid for $p=0,\dots, N$.
\end{corollary}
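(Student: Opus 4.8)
The plan is to obtain the differential-difference system (\ref{6.1a+}) for $\{H_p(t)\}$ directly by inverting the known system (\ref{6.1a}) for $\{\hat{H}_p(x)\}$, which is established in Corollary \ref{C6}. The starting point is the observation, already recorded just below (\ref{6.1+}), that $\hat{H}_p(x)$ is the Laplace transform of $H_p(t)$, i.e. $\hat{H}_p(x) = \int_0^\infty e^{-xt} H_p(t)\,dt$; this is immediate from the definitions (\ref{6.1}) and (\ref{6.1+}) together with the integral representation $e^{-x\sum_l t_l} = \int_0^\infty e^{-xt}\delta(t - \sum_l t_l)\,dt$ and an interchange of integration order (legitimate since all the integrands are nonnegative, or by absolute convergence for $x>0$, the parameters $a>-1$, $b-1>-1$ ensuring integrability near the endpoints).

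The key step is then to translate each term of (\ref{6.1a}) through the inverse Laplace transform. Using the standard correspondences $x\,\hat{F}(x) \longleftrightarrow -\,\frac{d}{dt}F(t)$ (valid here because the boundary term $F(0)$ vanishes: from (\ref{6.1+}) the delta constraint forces $t\in[0,N]$, and for $b-1>-1$, $b$ a nonnegative integer we have $b\ge 1$ so $H_p(0)=0$), and, iterating, $x\,\frac{d}{dx}\hat{F}(x) \longleftrightarrow -\frac{d}{dt}\big(t F(t)\big) = -F(t) - t\frac{d}{dt}F(t)$, one reads off the image of (\ref{6.1a}): the left side $(N-p)x\hat{H}_{p+1}(x)$ becomes $-(N-p)\frac{d}{dt}H_{p+1}(t)$; the term $(N-p)x\hat{H}_p(x)$ becomes $-(N-p)\frac{d}{dt}H_p(t)$; the term $\tilde{B}_p\hat{H}_p(x)$ becomes $\tilde{B}_p H_p(t)$; the term $x\frac{d}{dx}\hat{H}_p(x)$ becomes $-H_p(t) - t\frac{d}{dt}H_p(t)$; and $-\tilde{D}_p\hat{H}_{p-1}(x)$ becomes $-\tilde{D}_p H_{p-1}(t)$. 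Collecting these and multiplying through by $-1$ yields exactly
\[
(N-p)\frac{d}{dt}H_{p+1}(t) = (N-p)\frac{d}{dt}H_p(t) + (\tilde{B}_p - 1)H_p(t) - t\frac{d}{dt}H_p(t) - \tilde{D}_p H_{p-1}(t),
\]
which is (\ref{6.1a+}), valid for $p=0,\dots,N$ with the conventions $H_{-1}\equiv 0$, $H_{N+1}\equiv 0$ inherited from Corollary \ref{C6}.

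The main obstacle is purely one of rigour in the Laplace-transform inversion rather than of ideas: one must justify that each $H_p(t)$ is sufficiently regular (piecewise smooth on $[0,N]$, with the claimed vanishing at $t=0$) for the differentiation rules to apply termwise, and that the Laplace transform is injective on the relevant function class so that the transformed identity can be pulled back. Both points are standard --- $H_p(t)$ is a finite sum of products of power functions of $(t-j)$ with polynomial coefficients on each subinterval $[j,j+1]$, as follows from the structure already exploited in (\ref{10hy}) --- but they should be stated. An alternative, avoiding transform technology altogether, would be to reprove (\ref{6.1a+}) ``from scratch'' by the same integration-by-parts manipulations on (\ref{6.1+}) that were used in \cite{Fo93} to derive (\ref{5.1a}); that route is longer and I would mention it only as a remark. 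I therefore expect the proof to be short: set up the Laplace-transform pairing, invoke the two operational rules with the boundary term killed by $b\ge 1$, and transcribe (\ref{6.1a}) term by term.
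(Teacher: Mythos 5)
Your overall route---pulling the known system (\ref{6.1a}) for the Laplace transforms $\hat H_p$ back through the Laplace transform to obtain (\ref{6.1a+})---is exactly the paper's proof, which follows Davis by multiplying (\ref{6.1a}) by $e^{xt}$ and integrating over ${\rm Re}(x)=\gamma$. However, your execution contains a sign error which, followed literally, yields the wrong recurrence. With $\hat F(x)=\int_0^\infty e^{-xt}F(t)\,dt$ and $F(0)=0$, integration by parts gives $\int_0^\infty e^{-xt}F'(t)\,dt=x\hat F(x)-F(0)$, so the operational rule is $x\hat F(x)\longleftrightarrow +\tfrac{d}{dt}F(t)$, not $-\tfrac{d}{dt}F(t)$. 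Your second rule, $x\tfrac{d}{dx}\hat F(x)\longleftrightarrow -F(t)-t\tfrac{d}{dt}F(t)$, is correct, but only because it is obtained by applying the \emph{correct} first rule to $\tfrac{d}{dx}\hat F(x)\longleftrightarrow -tF(t)$; it is inconsistent with the first rule as you state it. The inconsistency bites in the transcription: with the correct rule the two $x$-multiplied terms become $+(N-p)\tfrac{d}{dt}H_{p+1}(t)$ and $+(N-p)\tfrac{d}{dt}H_p(t)$, and the transcribed identity is already (\ref{6.1a+}) with no overall sign flip needed. If instead you keep your signs and then ``multiply through by $-1$'', the terms coming from $\tilde B_p\hat H_p$, $x\tfrac{d}{dx}\hat H_p$ and $-\tilde D_p\hat H_{p-1}$---which acquired no sign change under your transcription---come out reversed, giving $(N-p)H_{p+1}'=(N-p)H_p'-(\tilde B_p-1)H_p+tH_p'+\tilde D_pH_{p-1}$ rather than (\ref{6.1a+}). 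There is no consistent convention under which ``transcribe, then multiply by $-1$'' works here, because three of the five terms carry no factor of $x$ or $d/dx$.

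A secondary point: your justification of $H_p(0)=0$ via $b\ge1$ is misdirected. The behaviour of $H_p(t)$ as $t\to0^+$ is governed by the $x_l^a$ factors, since the delta constraint forces every $x_l\to0$ there; the scaling $x_i\mapsto tx_i$ as in (\ref{7b}) gives $H_p(t)=O(t^{(a+1)N+\beta N(N-1)/2-1})$, so the boundary term vanishes for the relevant parameter ranges independently of $b$. With the sign convention corrected and the final ``$\times(-1)$'' deleted, your argument coincides with the paper's short proof.
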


\begin{proof}
Following \cite{Da70}, the strategy is to introduce the inverse Laplace transform into the 
differential-difference system (\ref{6.1a}) by multiplying through by $e^{xt}$ and integrating
over the contour Re$(x) = \gamma$. Minor manipulation leads to (\ref{6.1a+}).
\end{proof}

\begin{remark}
With  $b=0$ we have from the definition that $\tilde{D}_N = 0$, and we read off from (\ref{6.1a+})
with $p = N$ that
\begin{equation}\label{7.1a+}
 (\tilde{B}_N  - 1) {H}_N(t)  -  t  {d \over dt} {H}_N(t) = 0.
 \end{equation}
 Consequently  $H_N(t) = C t^{\tilde{B}_N  - 1}$. But $H_N(t)$ is proportional to $P_N^{(J)}(t)$, so 
 after noting the explicit value of $\tilde{B}_N$ this
 is consistent with (\ref{7d+}).
 \end{remark}
 
 \begin{remark}
 For generic $\tilde{B}_p, \tilde{D}_p$, eliminating $H_0(t), H_1(t),\dots,H_{N-1}(t)$ from the 
 system (\ref{6.1a+}) leads to a linear differential equation of degree $(N+1)$ for $H_N(t)$.
 However, since the particular values with $p=0$ are $\tilde{B}_p = \tilde{D}_p = 0$ as seen
 from Corollary \ref{C6}, the equation in (\ref{6.1a+}) with $p=0$ can be integrated to deduce
 \begin{equation}\label{gf.1a}
 N H_1(t) = (N-t) H_0(t).
  \end{equation}
With this refinement, the elimination process gives a linear differential equation of degree $N$ for
$H_N(t)$. The simplest nontrivial case is $N = 2$, where we obtain
\begin{multline}\label{gf.2a}
t(t-1)(t-2) H_2''(t) - \Big ( (4a+2\beta) - 4(2a+b+\beta)t + (3(a+b)+2\beta) t^2 \Big ) H_2'(t)  \\
+ (1 + 2a + 2b + \beta) \Big ( -2a - \beta + (a+b+\beta)t \Big ) H_2(t) = 0.
\end{multline} 
In the case $\beta = 1$ this characterisation of $H_2(t)$ has previously been given by Davis \cite[Eq.~(3.3), with
$a={1 \over 2}(n_1 - 3)$, $b={1 \over 2}(n_2 - 3)$]{Da70}. Moreover, a sequence of transformations
were identified, reducing this case of (\ref{gf.2a}) to the Gauss hypergeometric differential equation. These
can be extended to general $\beta > 0$. Thus we first substitute for $H_2(t)$ in favour of $K_2(t)$ by writing
$$
H_2(t) = t^{2a+1+\beta} (2 - t)^{2b} K_2(t)
$$
(note that the power of $t$ is consistent with (\ref{7d}) in the case $N=2$).
In the resulting equation for $K_2(t)$, we then make the change of variables $s=(t/(2-t))^2$ (this being independent of
$\beta$ is exactly the same as in \cite{Da70}). The 
Gauss hypergeometric equation for ${}_2 F_1(\lambda_1,\lambda_2;\lambda_3;s)$ results, with
$$
\lambda_1 = {1 \over 2} (\beta + 1), \quad \lambda_2 = -b, \quad \lambda_3 = a + {1 \over 2} (\beta + 3).
$$
Fixing the proportionality constant by appealing to  (\ref{7d}) we therefore conclude that for $0<t<1$
 \begin{equation}\label{gf.3a}
 P^{(J)}(t) \Big |_{N=2}  = C_{a,b,\beta} t^{2a+1+\beta} (1 - t/2)^{2b}  \, {}_2 F_1\Big ( {1 \over 2} (\beta + 1),  -b;  a + {1 \over 2} (\beta + 3); \Big ( {t \over 2 - t} \Big )^2 \Big ),  
  \end{equation}
  where 
   \begin{equation}\label{gf.3b}
C_{a,b,\beta} =     {\Gamma(a+b+2+\beta/2) \Gamma(a+b+2+\beta) \over \Gamma(2a+2+\beta) \Gamma(b+1) \Gamma(b+1+\beta/2)} .
 \end{equation}
The evaluation in the range $1<t<2$ follows from this and (\ref{7e}) with $N = 2$.
\end{remark}

\smallskip

We know that the system (\ref{6.1a}) is equivalent to the matrix differential equation (\ref{MD}).
Likewise,
introducing the $(N+1) \times (N+1)$ matrices
\begin{align}\label{AB}
\mathbf A & = - {\rm diag} \,[N,N-1,\dots,0] + {\rm diag}^+[N,N-1,\dots,1] \nonumber \\
\mathbf B & =  {\rm diag} \,[\tilde{B}_0-1, \tilde{B}_1-1,\dots,\tilde{B}_N-1] - {\rm diag}^-[\tilde{D}_1,\tilde{D}_2,\dots,\tilde{D}_N],
\end{align}
and the column vector $\mathbf H(t) = [ H_p(t) ]_{p=0}^N$ we see that (\ref{6.1a+}) is equivalent to the
first order linear matrix differential equation
\begin{equation}\label{AB+}
(\mathbf A + t \mathbf I_{N+1}) {d \over dt} \mathbf H(t) = \mathbf B \mathbf H(t).
\end{equation}
In the case $\beta = 1$ a minor rewrite of this differential equation is already present in
the pioneering paper \cite{Da70} for this class of result.
The latter also shows us how to make use of general theory in \cite{CL55} relating to
matrix differential equations, to transform (\ref{AB+}) to a form allowing the power
series expansions about the regular singular points $t=p$, ($p=0,1,\dots,N-1$)
to be analysed, and indeed exhibiting this feature of these points.

\begin{proposition}\label{P14}
With $\mathbf P =  \Big [ \binom{N-j}{N-k} \Big ]_{j,k=0}^N$, and $\mathbf H(t)$ as
specified below (\ref{AB}), define the column vector $\mathbf G(t)$ by
$\mathbf G(t) =  \mathbf P^{-1} \mathbf H(t)$. Define the tridiagonal matrix
$\mathbf X = [x_{jk} ]_{j,k=0}^N$ with entries
$$
  x_{jk} = \left \{
 \begin{array}{ll}  u_j, & k = j - 1\\
 v_j , & k = j  \\
  w_j, & k = j + 1 \\
 0, & {\rm otherwise}, \end{array}    \right.
 $$
 where
 \begin{align*}
 u_j & = -j( \beta N/2 + b) + (\beta/2)j^2 \nonumber \\
 v_j & = (a+b+1+(\beta/2)(2N-1))j - (\beta/2)(N-j) (2j+1) + (\beta N/2 + b)(N - 2j) - 1\nonumber \\
 w_j & = (a+b+1+(\beta/2)(2N-1))(j-N) + (\beta/2)(N-j)(N-j-1) + (\beta N/2 + b) (N-j),
 \end{align*}
 and set 
 $$
 \mathbf \Lambda  = - {\rm diag} [N, N-1,\dots, 1, 0 ].
 $$
 We have
 \begin{equation}\label{GG}
 (\mathbf \Lambda + t \mathbf I_{N+1}) {d \over dt} \mathbf G(t)  = \mathbf X   \mathbf G(t).
 \end{equation}

\end{proposition}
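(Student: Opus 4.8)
The plan is to verify \eqref{GG} by direct substitution: since $\mathbf G(t) = \mathbf P^{-1}\mathbf H(t)$, equation \eqref{AB+} becomes
$(\mathbf A + t\mathbf I_{N+1})\mathbf P\,\tfrac{d}{dt}\mathbf G(t) = \mathbf B\mathbf P\,\mathbf G(t)$, so after left-multiplying by $\mathbf P^{-1}$ the claim reduces to the two matrix identities $\mathbf P^{-1}(\mathbf A + t\mathbf I_{N+1})\mathbf P = \mathbf \Lambda + t\mathbf I_{N+1}$ and $\mathbf P^{-1}\mathbf B\mathbf P = \mathbf X$. The first of these splits further: because $\mathbf P^{-1}(t\mathbf I_{N+1})\mathbf P = t\mathbf I_{N+1}$ automatically, it is equivalent to $\mathbf P^{-1}\mathbf A\mathbf P = \mathbf \Lambda$, i.e.\ to $\mathbf A\mathbf P = \mathbf P\mathbf \Lambda$. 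Since $\mathbf \Lambda = -\,\mathrm{diag}[N,N-1,\dots,0]$ is diagonal, this says precisely that the $k$-th column of $\mathbf P$ is an eigenvector of $\mathbf A$ with eigenvalue $-(N-k)$. Recalling $\mathbf A = -\,\mathrm{diag}[N,\dots,0] + \mathrm{diag}^+[N,\dots,1]$ and $(\mathbf P)_{jk} = \binom{N-j}{N-k}$, the eigenvector relation becomes the Pascal-type identity $-(N-j)\binom{N-j}{N-k} + (N-j)\binom{N-j-1}{N-k} = -(N-k)\binom{N-j}{N-k}$, which follows from $\binom{N-j}{N-k} - \binom{N-j-1}{N-k} = \binom{N-j-1}{N-k-1}$ together with the absorption identity $(N-j)\binom{N-j-1}{N-k-1} = (N-k)\binom{N-j}{N-k}$. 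So the first identity is a short binomial computation.

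The substantive part is the second identity $\mathbf B\mathbf P = \mathbf P\mathbf X$. Here $\mathbf B$ is bidiagonal (main diagonal $\tilde B_p - 1$, subdiagonal $-\tilde D_p$) and $\mathbf X$ is the tridiagonal matrix with entries $u_j, v_j, w_j$ given in the statement, while $\mathbf P$ is the (lower-triangular) Pascal matrix $\binom{N-j}{N-k}$. Writing out the $(j,k)$ entry of both sides, the left side is $(\tilde B_j - 1)\binom{N-j}{N-k} - \tilde D_j\binom{N-j+1}{N-k}$ and the right side is $w_k\binom{N-j}{N-k+1} + v_k\binom{N-j}{N-k} + u_k\binom{N-j}{N-k-1}$ (with the conventions that out-of-range binomials and the $k=0$ or $k=N$ boundary terms vanish). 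Substituting the explicit formulas for $\tilde B_p = p(a+b+1+(\beta/2)(2N-p-1))$, $\tilde D_p = p((\beta/2)(N-p)+b)$, $u_k, v_k, w_k$, and using the absorption relations $(N-k)\binom{N-j}{N-k} = (N-j)\binom{N-j-1}{N-k-1}$ etc.\ to convert everything to a common binomial $\binom{N-j}{N-k}$ times a rational polynomial in $j,k$, both sides should reduce to the same quadratic polynomial in $j$ and $k$; equality is then checked by matching coefficients of $1, j, k, j^2, jk, k^2$. This is essentially the computation behind Proposition~\ref{P14}, and it is morally identical (for general $\beta$) to the $\beta=1$ reduction carried out in \cite{Da70,CL55}; one may alternatively phrase the whole thing as: the similarity by $\mathbf P$ is exactly the change of basis from the elementary-symmetric-function data $e_p(1-x_1,\dots,1-x_N)$ to the power-sum-of-$1$ shifted data, under which the binomial transform of \eqref{6.1a+} gives \eqref{GG} term by term.

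The main obstacle is purely the bookkeeping in the $\mathbf B\mathbf P = \mathbf P\mathbf X$ verification: keeping the three boundary cases ($k=0$, $0<k<N$, $k=N$) consistent with the stated formulas for $u_j, v_j, w_j$ (in particular the $-1$ shift coming from $\tilde B_p - 1$, which is what forces the $-1$ in $v_j$), and correctly tracking the index shift between the subdiagonal term $-\tilde D_j\binom{N-j+1}{N-k}$ and the superdiagonal term $w_k\binom{N-j}{N-k+1}$. Once the entries are all reduced to a multiple of $\binom{N-j}{N-k}$, no cleverness is needed — it is a finite coefficient match — but it is the step where a sign or an index error would most easily creep in, so I would organize it by first recording the absorption identities to be used and then tabulating the coefficient of $\binom{N-j}{N-k}$ contributed by each of the five terms.
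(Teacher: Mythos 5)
Your overall strategy coincides with the paper's: conjugate \eqref{AB+} by the Pascal matrix $\mathbf P$ of eigenvectors of $\mathbf A$, so that the claim splits into $\mathbf P^{-1}\mathbf A\mathbf P = \mathbf \Lambda$ (your binomial/absorption argument for this is correct and is exactly the check the paper performs) and the identification of $\mathbf P^{-1}\mathbf B\mathbf P$ with the stated tridiagonal $\mathbf X$. Where you differ is in the second step: the paper writes $\mathbf B$ as a linear combination of three elementary matrices $\mathbf R_1,\mathbf R_2,\mathbf R_3$ (diagonal and subdiagonal matrices with entries linear and quadratic in the index) plus a multiple of the identity, and evaluates each conjugate $\mathbf P^{-1}\mathbf R_i\mathbf P$ in closed form using the explicit inverse $\mathbf P^{-1} = [(-1)^{j+k}\binom{N-j}{N-k}]$ together with binomial generating-function identities (differentiating $\sum_{k'}(-1)^{k'}\binom{N-j}{N-k'}\binom{N-k'}{N-k}x^{k'}$ at $x=1$). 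Your plan instead verifies $\mathbf B\mathbf P = \mathbf P\mathbf X$ entrywise, which avoids computing $\mathbf P^{-1}$ but requires knowing $\mathbf X$ in advance; since $\mathbf X$ is given in the statement this is legitimate, and the bookkeeping is comparable.

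Two points in your written setup need repair before the coefficient match can go through. First, an index slip: the $(j,k)$ entry of $\mathbf P\mathbf X$ is $u_{k+1}\binom{N-j}{N-k-1} + v_k\binom{N-j}{N-k} + w_{k-1}\binom{N-j}{N-k+1}$, not $u_k\binom{N-j}{N-k-1}+v_k\binom{N-j}{N-k}+w_k\binom{N-j}{N-k+1}$; the entry of column $k$ of $\mathbf X$ coming from its subdiagonal sits in row $k+1$ and equals $u_{k+1}$, and the one from its superdiagonal sits in row $k-1$ and equals $w_{k-1}$. With your subscripts the match fails. Second, the proposed reduction of every term to a rational multiple of $\binom{N-j}{N-k}$ breaks down precisely on the band $k=j-1$, where $\binom{N-j}{N-k}=0$ yet both sides are nonzero: there the identity reads $-\tilde{D}_j = u_j$, which does hold (indeed $u_j = -j(\beta N/2+b)+(\beta/2)j^2 = -\tilde{D}_j$), but it must be checked as a separate case rather than extracted from a polynomial identity multiplying $\binom{N-j}{N-k}$. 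Once these are corrected, the remaining verification on $k\ge j$ is the finite coefficient match you describe, and the argument closes.
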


\begin{proof}
The general strategy from \cite{CL55} as applied in \cite{Da70} to the case $\beta = 1$ of
(\ref{AB+}) requires
finding the matrix of eigenvectors $\mathbf P$, and matrix of diagonal eigenvalues
$\mathbf \Lambda$, of the matrix $\mathbf A$.  We can check that
\begin{equation}\label{C5}
\mathbf \Lambda  = - {\rm diag} [N, N-1,\dots, 1, 0 ], \qquad \mathbf P = \bigg [ \binom{N-j}{N-k} \bigg ]_{j,k=0}^N.
\end{equation}
Note that $\mathbf P$ is upper triangular. We can check too that
\begin{equation}\label{c6}
 \mathbf P^{-1} = \bigg [ (-1)^{j+k} \binom{N-j}{N-k} \bigg ]_{j,k=0}^N.
 \end{equation}
 The relevance of knowing $ \mathbf P^{-1}$  is that it is required in the diagonalisation formula
 $ \mathbf P^{-1} \mathbf A \mathbf P = \mathbf \Lambda $.
 Thus if we write $\mathbf H(t) = \mathbf P \mathbf G(t)$ and multiply through by $ \mathbf P^{-1}$
 the matrix differential equation (\ref{AB+}) reads
 \begin{equation}\label{c7}
 (\mathbf \Lambda + t \mathbf I_{N+1}) {d \over dt} \mathbf G(t) =  (\mathbf P^{-1} \mathbf B \mathbf P) \mathbf G(t).
\end{equation}

From the definition of $\mathbf B$ as given by (\ref{AB}) and Corollary \ref{C6}, we see that to compute
$\mathbf P^{-1} \mathbf B \mathbf P$ it suffices to compute
$
\mathbf Q_j = \mathbf P^{-1} \mathbf R_j \mathbf P
$, $(j=1,2,3)$
where 
\begin{align*}
& \mathbf R_1 = {\rm diag} \, [0,1,\dots,N], \quad \mathbf R_2 = {\rm diag}^- \, [1,\dots,N], \\
& \mathbf R_3 =  {\rm diag} \, [0^2,1^2,\dots,N^2] - {\rm diag}^- \, [1^2,\dots,N^2].
\end{align*}
Thus
\begin{equation}\label{vB}
 \mathbf B = (a + b + 1 + (\beta/2) (2N - 1)) \mathbf R_1 + (\beta N/2 + b) \mathbf R_2 - (\beta/2) \mathbf R_3 - \mathbf I_{N+1},
\end{equation}

From the explicit form of $\mathbf P^{-1}, \mathbf P$ we see
\begin{align*}
(\mathbf Q_1)_{jk}  & =  (-1)^j \sum_{k' = 0}^N (-1)^{k'} \binom{N-j}{N-k'} k'  \binom{N-k'}{N-k} \\
(\mathbf Q_2)_{jk}  & =  (-1)^j \sum_{k' = 0}^N (-1)^{k'+1} \binom{N-j}{N-(k'+1)} (k' +1) \binom{N-k'}{N-k} \\
(\mathbf Q_3)_{jk}  & =  (-1)^j \sum_{k' = 0}^N (-1)^{k'} \bigg \{ \binom{N-j}{N-k'} (k')^2
+ \binom{N-j}{N-(k'+1)} (k' +1)^2 \bigg \}  \binom{N-k'}{N-k}.
\end{align*}
To evaluate $(\mathbf Q_1)_{jk}$, note upon simple manipulation and use
of the binomial theorem that
\begin{equation}\label{bi1}
 \sum_{k' = 0}^N (-1)^{k'} \binom{N-j}{N-k'}   \binom{N-k'}{N-k}  x^{k'} =  \chi_{k \ge j} 
 {(N-j)! \over (N-k)! (k-j)!} (-x)^j  (1 - x)^{k-j}.
 \end{equation}
 Differentiating with respect to $x$ and setting $x=1$ shows
 $$
 (\mathbf Q_1)_{jk} = \left \{
 \begin{array}{ll}  j, & k = j \\
 j - N, & k = j + 1 \\
 0, & {\rm otherwise}. \end{array}    \right.
 $$
 
 In relation to $(\mathbf Q_2)_{jk}$, we have analogous to (\ref{bi1}),
\begin{multline}\label{bi2}
 \sum_{k' = 0}^N (-1)^{k'} \binom{N-j}{N-(k'+1)}   \binom{N-k'}{N-k}  x^{k'} \\ =  \chi_{k \ge j-1} 
 {(N-j)! \over (N-k)! (k-j+1)!} (-x)^{j-1} \Big ( (N - j + 1) (1 - x)^{k-j+1} + x (k - j + 1) (1 - x)^{k-j} \Big ),
 \end{multline} 
 and hence
 $$
  (\mathbf Q_2)_{jk} = \left \{
 \begin{array}{ll}  - j, & k = j - 1\\
 N - 2j , & k = j  \\
  N - j, & k = j + 1 \\
 0, & {\rm otherwise}. \end{array}    \right.
 $$
 Knowledge of the summations (\ref{bi1}) and (\ref{bi2}) is sufficient to compute $(\mathbf Q_3)_{jk}$. We find
 $$
  (\mathbf Q_3)_{jk} = \left \{
 \begin{array}{ll} -j^2, & k = j - 1\\
 (N - j)(2j+1), & k = j  \\
 - (N-j)(N-j-1), & k = j + 1 \\
 0, & {\rm otherwise}. \end{array}    \right.
 $$
 
 Since from (\ref{vB})
 $$
 \mathbf P^{-1}  \mathbf B   \mathbf P = (a + b + 1 + (\beta/2) (2N - 1)) \mathbf Q_1 + (\beta N/2 + b) \mathbf Q_2 - (\beta/2) \mathbf Q_3 - \mathbf I_{N+1},
 $$
with the above knowledge of the entries of the matrices $\mathbf Q_i$, the differential equation (\ref{GG}) now follows from (\ref{c7}).
 \end{proof}

 For $p=0,1,\dots,N$ define
 $\mathbf \Lambda_p = {\rm diag} \, [(p)^{N+1} ] +  \mathbf \Lambda$, and set $s = t - p$, allowing us to rewrite
 (\ref{GG}) as
\begin{equation}\label{GG+} 
( \mathbf \Lambda_p + s \mathbf I_{N+1} ) {d \over ds} \mathbf G(s) = \mathbf X  \mathbf G(s).
 \end{equation}
 We have from the corresponding definitions that
 \begin{equation}\label{GH} 
 (\mathbf G(t))_N =  H_N(t)  \propto P^{(J)}(t).
 \end{equation}
 Thus in the variable $s$, we have from (\ref{10hy}) that  $G_{N}(s)$, now specified as the final
 component in the vector solution of the matrix differential equation (\ref{GG+}), permits a Frobenius type
 series expansion
  \begin{equation}\label{GH+} 
(\mathbf G(s))_N = s^{\gamma_p} \sum_{l=0}^\infty c_l s^l,
 \end{equation}
 where $\gamma_p$ is specified by (\ref{10g}). We now take up the task of showing this directly from
 (\ref{GG+}), and moreover specifying a recurrence for the coefficients $\{  c_l \}$. 
 
 \begin{proposition}\label{P16}
 Let $v_j$ be specified as in Proposition \ref{P14}. We have that (\ref{GG+}) admits a vector Frobenius type solution
  \begin{equation}\label{3.40a}
 \mathbf G(s) = C s^{v_{N-p}} \sum_{l=0}^\infty \mathbf g_l s^l.
  \end{equation}
 Here $C \ne 0$ is an arbitrary scalar, and the coefficient vectors  $\{ \mathbf g_l  \}$ are determined
 by
  \begin{equation}\label{g0}
  \mathbf g_0 = [ \delta_{j,N-p} ]_{j=0}^N
\end{equation}  
 and
  \begin{equation}\label{g1}
(v_{N-p} + l ) \mathbf \Lambda_p  \mathbf g_l = ( \mathbf X - (v_{N-p} + l - 1) \mathbf I_{N+1} )   \mathbf g_{l-1}, \qquad l=1,2,\dots
\end{equation} 
The case $l=n$ of (\ref{g1}) determines $(\mathbf g_n)_s$ for $s \ne N-p$, and the case $l=n+1$ implies
 \begin{equation}\label{g2}
 n (\mathbf g_n)_{N-p} = u_{N-p} (\mathbf g_n)_{N-p-1} +  w_{N-p} (\mathbf g_n)_{N-p+1} ,
\end{equation} 
where $u_j, w_j$ are as in  Proposition \ref{P14}.
\end{proposition}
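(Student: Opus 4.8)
The plan is to substitute the Frobenius ansatz (\ref{3.40a}) into the matrix differential equation (\ref{GG+}), equate coefficients of each power of $s$, and observe that the resulting chain of vector equations is exactly (\ref{g0})--(\ref{g1}), with (\ref{g2}) emerging automatically as the solvability condition dictated by the single vanishing diagonal entry of $\mathbf \Lambda_p$. The only structural facts used are that $\mathbf \Lambda_p = {\rm diag}\,[\,j-(N-p)\,]_{j=0}^N$ has a one-dimensional kernel, spanned by the coordinate vector $[\delta_{j,N-p}]_{j=0}^N$, and that $\mathbf X$ is tridiagonal with entries $u_j,v_j,w_j$ as in Proposition \ref{P14}.

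First I would write $\mathbf G(s)=C\sum_{l\ge 0}\mathbf g_l\,s^{\rho+l}$ with $\rho$ to be determined, differentiate, and insert into (\ref{GG+}). Collecting the coefficient of $s^{\rho+l-1}$, and setting $\mathbf g_{-1}:=\mathbf 0$, gives the single family
\[
(\rho+l)\,\mathbf \Lambda_p\,\mathbf g_l \;=\; \bigl(\mathbf X - (\rho+l-1)\mathbf I_{N+1}\bigr)\mathbf g_{l-1},\qquad l=0,1,2,\dots .
\]
At $l=0$ this reads $\rho\,\mathbf \Lambda_p\,\mathbf g_0=\mathbf 0$, which puts $\mathbf g_0$ into $\ker\mathbf \Lambda_p$; absorbing the scalar into $C$ gives (\ref{g0}). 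Taking the $(N-p)$-th component of the $l=1$ equation, the left side vanishes because $(\mathbf \Lambda_p)_{N-p,N-p}=0$, while the right side equals $v_{N-p}-\rho$ (using $\mathbf X_{N-p,N-p}=v_{N-p}$ together with $\mathbf g_0=[\delta_{j,N-p}]$); hence the indicial exponent is $\rho=v_{N-p}$, and with this value the displayed family becomes precisely (\ref{g1}).

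It remains to verify that (\ref{g1}) is a genuine, non-redundant recurrence. For $j\ne N-p$, the $j$-th component of (\ref{g1}) has left side $(v_{N-p}+l)\,(j-(N-p))\,(\mathbf g_l)_j$ and, by tridiagonality of $\mathbf X$, a combination of $(\mathbf g_{l-1})_{j-1},(\mathbf g_{l-1})_j,(\mathbf g_{l-1})_{j+1}$ through $u_j,v_j,w_j$ on the right, so it determines $(\mathbf g_l)_j$ from $\mathbf g_{l-1}$ provided $v_{N-p}+l\ne 0$, which is the case in the range of parameters of Proposition \ref{P1x}. The $(N-p)$-th component of (\ref{g1}) contains no $(\mathbf g_l)_{N-p}$ term; written at level $l=n+1$ and using $\rho=v_{N-p}$ to cancel the diagonal shift, it collapses to $n(\mathbf g_n)_{N-p}=u_{N-p}(\mathbf g_n)_{N-p-1}+w_{N-p}(\mathbf g_n)_{N-p+1}$, that is, to (\ref{g2}); for $n\ge 1$ this supplies the component $(\mathbf g_n)_{N-p}$ not fixed at step $n$, and for $n=0$ it reduces to $0=u_{N-p}\cdot 0+w_{N-p}\cdot 0$, automatically consistent with (\ref{g0}). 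Hence the recurrence closes: step $n$ of (\ref{g1}) fixes $(\mathbf g_n)_s$ for $s\ne N-p$ and (\ref{g2}) fixes $(\mathbf g_n)_{N-p}$, so every coefficient-matching relation is satisfied and (\ref{3.40a}) solves (\ref{GG+}) formally; its convergence near $s=0$ is the standard statement at a regular singular point \cite{CL55}, in agreement with the analyticity of $F_N^{(p)}$ and the leading behaviour $s^{\gamma_p}$ of $(\mathbf G(s))_N$ already obtained in Proposition \ref{P1x}. The only delicate point is this bookkeeping at the singular row: recognising that it is not an equation for $(\mathbf g_l)_{N-p}$ but a compatibility condition, and checking that its $n=0$ instance holds identically rather than forcing an extra constraint on $\mathbf g_0$.
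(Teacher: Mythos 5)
Your proposal is correct and follows essentially the same route as the paper: substitute the Frobenius ansatz, equate powers of $s$ to get (\ref{g1}), identify $\mathbf g_0$ as the kernel vector of $\mathbf \Lambda_p$, and read off (\ref{g2}) as the compatibility condition forced by the vanishing diagonal entry in row $N-p$. The only additions beyond the paper's argument are your explicit derivation of the indicial exponent $\rho=v_{N-p}$ from the $l=1$ compatibility condition and the check that the $n=0$ instance of (\ref{g2}) is vacuous, both of which are correct and consistent with what the paper leaves implicit.
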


\begin{proof}
We simply substitute the Frobenius series in (\ref{GG+}) and equate like powers of $s$ to deduce
(\ref{g1}). In the case $l=0$ the right hand side of (\ref{g1}) is to be interpreted as the zero vector. Thus
$\mathbf g_0$ is an eigenvector of $\mathbf \Lambda_p$ corresponding to the eigenvalue zero.
From the definition of $\mathbf \Lambda_p$ this eigenvector is, up to choice of normalisation, given
by (\ref{g0}).

From (\ref{g1}), taking into consideration the definition of $\mathbf \Lambda_p$ we can read off
the value of the components $(\mathbf g_l)_s$ for $s \ne N-p$. 
The case $s = N - p$ is special. Then the diagonal entry in row $N - p$ of $\lambda_p$ is
zero, implying each entry of row $N - p$  is zero so $\lambda_p$ is not invertible. On the other
hand, this feature implies that row $N - p$ on the right hand side of (\ref{g1}) must also consist of
all zeros. Choosing now $l=n+1$ in the resulting equation (\ref{g2}) results.
\end{proof}

\begin{remark}
From the definition (\ref{10g}) of $\gamma_p$, and the definition in  Proposition \ref{P14} of $v_j$
we see that
  \begin{equation}\label{g1g}
v_{N-p} - \gamma_p = - p.
\end{equation}  
This is consistent with the initial condition (\ref{g0}) that implies $(\mathbf g_l)_q = 0$ for $q=N-p+1,\dots,N$.
\end{remark}

We can use Proposition \ref{P16} to explicitly compute the Frobenius series
of $({\mathbf G}(s))_N  =  ({\mathbf G(t-p)})_N$ for $p=0,1,\dots,N$. The structure of the coefficients
becomes increasingly more complicated with increasing powers in the series, and
also as $p$ increases. As specific examples, upon normalising so that the 
coefficient of the leading term is unity we have
\begin{align*}
 ({\mathbf G(t)})_N & = t^{v_N} \Big ( 1 - {u_N w_{N-1} \over v_N + 1} t + \cdots \Big ) \\
 ({\mathbf G(t-1)})_N & = (t-1)^{v_{N-1}+1} \Big ( 1 - \Big ( {u_{N-1} w_{N-2} \over v_{N-1} + 2} -  {u_N w_{N-1} \over v_{N-2} + 2} 
-{(v_N - v_{N-1} - 1) \over v_{N-1} + 2} \Big ) ( t  - 1) + \cdots \Big ).
\end{align*}
It may be noted that for $a,b \in \mathbb Z_{\ge 0}$ and $\beta \in \mathbb Z^+$, the series in $(t-p)$ starting with 1 within the brackets,
and a factor $(t-p)^{v_{N-p}+p}$ outside, terminates  to give a polynomial of degree $ap+b(N-p)+p(N-p)\beta$ as is consistent with
Remark \ref{R3.5}, point 2.
Despite the ability to compute the Frobenius series solutions, the functional form of $P^{(J)}(t)$ remains undetermined
without knowledge of the scalar $C=C(p)$ in (\ref{3.40a}) for each $p$. Davis
\cite[Paragraph above (3.3)]{Da70} remarks: ``The calculation of the numerical
coefficients in these linear combinations presents a formidable unsolved
problem'', and leaves it there. While this remains true in general, our (\ref{10hy}) and
Proposition \ref{P10} solves this problem for parameters $b,\beta$ non-negative
integers. Thus, using the notation
 therein, under this assumption we read off the explicit  evaluation of the sought scalar
\begin{equation}\label{3.40b}
C(p) ( \mathbf g_p)_N = {1 \over S_N(a,b,\beta)} \xi_p \binom{N}{p} K_N(a,b,p,\beta),
\end{equation}
and (\ref{10hy})  follows.

For this knowledge to be of practical value, we have to overcome the problem that  the power series factor
of the Frobenius solution, denoted $F_N^{(p)}(s)$
in (\ref{10hy}), generally only has radius of convergence unity. In fact this is not an issue if we restrict the parameter $a$, in
addition to $b, \beta$, to be a non-negative integer. Then $F_N^{(p)}(s)$ is a polynomial of degree (\ref{3.29a}).

For $a \notin \mathbb Z_{\ge 0}$, a possible strategy to analytically
continue beyond $s = 1$ is to use the Frobenius series about $s=0$ to obtain an accurate value of $\mathbf G(s_0)$ with $0< s_0 < 1$
 inside the radius of convergence. The point $s=s_0$ is an ordinary point of the differential equation 
(\ref{GG+}), which with $u = s- s_0$ can be rewritten
\begin{equation}\label{GGs} 
 {d \over du} \mathbf G(u) =   ( \mathbf \Lambda_p + (u+ s_0) \mathbf I_{N+1} )^{-1} \mathbf X  \mathbf G(u).
 \end{equation}
 Given $\mathbf G(s_0)$, (\ref{GGs}) uniquely determines a power series series solution in $u$ with radius of convergence $1 + s_0$, and thus gives
 an analytic continuation relative to the Frobenius solution. Repeating this procedure, for a given value of
 the singular point $t=p$, $G(t)$ can, in theory, be extended to the required range $p \le t \le N $. Unfortunately, in practice it
 was found that this procedure was unstable with respect to the required truncations of the power series in the components of $ \mathbf G(u)$, so
 a working solution for $a \notin \mathbb Z_{\ge 0}$ remains.

\begin{figure*}
\centering
\includegraphics[width=0.98\textwidth]{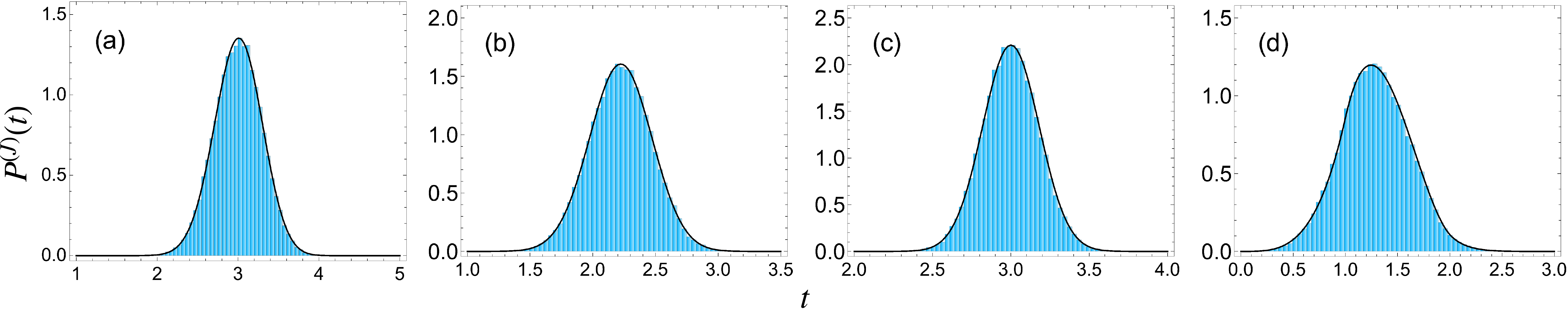}
\caption{Plots of probability density function of Jacobi-ensemble trace for three parameter sets: (a) $N = 5, a = 3, b = 1, \beta = 1$, (b) $N = 4, a = 1, b = 0, \beta = 2$, (c) $N = 6, a = 0, b = 0, \beta = 4$, (d) $N = 3, a = -1/2, b = 0, \beta = 1$.}
\label{fig1}
\end{figure*}

For $a \in  \mathbb Z_{\ge 0}$, the recursive scheme of Proposition \ref{P10} combined with (\ref{10hy})  for
 evaluating the trace-distribution in the Jacobi case has been implemented in Mathematica codes and the corresponding files have been provided as ancillary material. Additionally, for interested readers, we include an ``experimental'' code which implements analytical continuation about the ordinary point $s_0=1/2$ to deal with $a \notin \mathbb Z_{\ge 0}$ case. As cautioned above, this procedure is unstable and may produce satisfactory results only for certain choices of parameters. A few example plots of the evaluated distribution and their comparison with Monte Carlo simulations based on $\beta$-Jacobi matrix model~\cite{ES2008} appear in figure~\ref{fig1}. In particular, plot (d) shows the result for $a=-1/2$, along with $n=3, b=0, \beta=1$). It has been obtained using the analytical continuation approach for $a \notin  \mathbb Z_{\ge 0}$, which works satisfactorily for this set of parameters. Furthermore, on this resolution, the large $N$
Gaussian form of $P^{(J)}(t)$, with specific means and variances (see e.g.~\cite[Eqns.~(42)--(44)]{KP10}), is already evident despite the values of $N$ being small and the exact functional
form of  $P^{(J)}(t)$ changing at $t=1,2,\dots,N-1$.

\section*{Acknowledgements}
 P.J.F.~acknowledges support from the Australian Research Council
 (ARC) through the ARC Centre of Excellence for Mathematical \& Statistical Frontiers.

 \providecommand{\bysame}{\leavevmode\hbox to3em{\hrulefill}\thinspace}
\providecommand{\MR}{\relax\ifhmode\unskip\space\fi MR }
\providecommand{\MRhref}[2]{%
  \href{http://www.ams.org/mathscinet-getitem?mr=#1}{#2}
}
\providecommand{\href}[2]{#2}

\end{document}